\documentclass[11pt]{article} 
\usepackage{times}
\usepackage{amsmath}
\usepackage{amssymb}
\usepackage{algorithm2e}
\usepackage{graphicx,epsfig}

\newcommand {\E}{\mathbf E}
\newcommand {\T}{\mathcal T}

\newcommand{\p}{\mathbf{p}}
\newcommand{\eat}[1]{}
\newtheorem{theorem}{Theorem}[section] 
\newtheorem{lemma}[theorem]{Lemma}

\newtheorem{definition}{Definition}

\newtheorem{ex}{Example}[section]

\newenvironment{proof}{{\bf Proof:}}{ \hfill $\Box$}
\renewcommand{\paragraph}[1]{\medskip \noindent {\bf{#1}}}

\topmargin 0pt
\advance \topmargin by -\headheight
\advance \topmargin by -\headsep
\textheight 9.0in
\oddsidemargin 0pt
\evensidemargin \oddsidemargin
\marginparwidth 0.5in
\textwidth 6.5in
\renewcommand{\O}{\mathcal{O}}

\newcommand{\Prob}{\mathbb{P}}

\begin{document}

\title{Graph Sparsification in the Semi-streaming Model}
\author{Kook Jin Ahn\and
Sudipto Guha\thanks{
Department of Computer and Information Science, University of Pennsylvania,
Philadelphia PA 19104-6389.
Email:{\tt \{kookjin,sudipto\}@cis.upenn.edu}.
Research supported in part by an Alfred
 P. Sloan Research Fellowship, and NSF Awards
CCF-0644119 and IIS-0713267.} }
\maketitle

\thispagestyle{empty}

\begin{abstract}
  Analyzing massive data sets has been one of the key motivations for
  studying streaming algorithms. In recent years, there has been
  significant progress in analysing distributions in a streaming setting,
  but the progress on graph problems has been limited. A main reason for
  this has been the existence of linear space lower bounds for even simple
  problems such as determining the connectedness of a graph. However, in
  many new scenarios that arise from social and other interaction
  networks, the number of vertices is significantly less than
  the number of edges. This has led to the formulation of the
  semi-streaming model where we assume that the space is (near) linear in the
  number of vertices (but not necessarily the edges), and the edges appear
  in an arbitrary (and possibly adversarial) order.

  However there has been limited progress in analysing graph algorithms in
  this model. In this paper we focus on graph sparsification, which is one of the
  major building blocks in a variety of graph algorithms. Further,
  there has been a long history of (non-streaming) sampling algorithms that provide
  sparse graph approximations and it a natural question to ask: since the
  end result of the sparse approximation is a small (linear) space
  structure, can we achieve that using a small space, and in addition
  using a single pass over the data? The question is interesting from the
  standpoint of both theory and practice and we answer the question in the
  affirmative, by providing a one pass $\tilde{O}(n/\epsilon^{2})$ space
  algorithm that produces a sparsification that approximates each cut to a
  $(1+\epsilon)$ factor. 
  We also show that $\Omega(n \log \frac1\epsilon)$ space is
  necessary for a one pass streaming algorithm to approximate the min-cut,
  improving upon the $\Omega(n)$ lower bound that arises from lower bounds for testing
  connectivity.
\end{abstract}

\newcommand{\Th}{\Theta}
\renewcommand{\S}{\mathcal{S}}
\renewcommand{\H}{\mathbf{T}}
\renewcommand{\T}{\mathbf{T}}
\renewcommand{\p}{\mathbf{p}}
\newcommand{\q}{\mathbf{q}}
\newcommand{\Eta}{\mathcal{E}}
\newcommand{\Z}{\mathcal{Z}}
\newcommand{\Y}{\mathcal{Y}}
\renewcommand{\L}{\mathcal{L}}

\section{Introduction}
The feasibility of processing graphs in the data stream model was one of
the early questions investigated in the streaming model
\cite{henzinger1998}.  However the results were not encouraging, even to
decide simple properties such as the connectivity of a graph, when the
edges are streaming in an arbitrary order required $\Omega(n)$ space. In
comparison to the other results in the streaming model, \cite{AMS96,MP80}
which required polylogarithmic space, graph alogithms appeared to
difficult in the streaming context and did not receive much attention
subsequently.

However in recent years, with the remergence of social and other
interaction networks, questions of processing massive graphs have once
again become prominent. Technologically, since the publication of
\cite{henzinger1998}, it had become feasible to store larger quantities of
data in memory and the semi-streaming model was proposed in
\cite{mcgregor2005,muthu}. In this model we assume that the space is (near) linear
in the number of vertices (but not necessarily the edges). Since its
formulation, the model has become more appealing from the contexts of
theory as well as practice.  From a theoretical viewpoint, the model still
offers a rich potential trade-off between space and
accuracy of algorithm, albeit at a different threshold than
polylogarithmic space. From a practical standpoint, in a variety of contexts involving large
graphs, such as image segmentation using graph cuts, the ability of the
algorithm to retain the most relevant information in main memory has been
deemed critical. In essence, an algorithm that runs out of main memory
space would become unattractive and infeasible. In such a setting, it may
be feasible to represent the vertex set in the memory whereas the edge
set may be significantly larger.

In the semi-streaming model, the first results were provided by
\cite{mcgregor2005} on the construction of graph spanners.  Subsequently,
beyond explorations of connectivity \cite{demetrescu2006}, and (multipass)
matching \cite{mcgregor2005b}, there has been little development of
algorithms in this model.  In this paper we focus on the problem of graph
sparsification in a single pass, that is, constructing a small space
representation of the graph such that we can estimate the size of any cut.
Graph sparsification \cite{benczur1996,spielman2008} remains one of the
major building blocks for a variety of graph algorithms, such as flows and
disjoint paths, etc. At the same time, sparsification immediately provides
a way of finding an approximate min-cut in a graph.  The problem of
finding a min-cut in a graph has been one of the more celebrated problems and
there is a vast literature on this problem, including both 
deterministic ~\cite{gomory1961,hao1992} as well as randomized
algorithms~\cite{karger1993,karger1994,karger1996,karger2000} -- see
\cite{chekuri1997} for a comprehensive discussion of various algorithms.
We believe that a result on sparsification will
enable the investigation of a richer class of problems in graphs in the
semi-streaming model.

In this paper we will focus exclusively on the model that the stream is
adversarially ordered and a single pass is allowed. From the standpoint of
techniques, our algorithm is similar in spirit to the algorithm of
Alon-Matias-Szegedy \cite{AMS96},
where we simultaneously sample and estimate from the stream. In fact we
show that in the semi-streaming model we can perform a similar, but
non-trivial, simultaneous sampling and
estimation. This is pertinent because sampling algorithms for
sparsification exist~\cite{benczur1996,spielman2008}, which use $\O(n
\mathrm{polylog}(n))$ edges. However these algorithms sample edges in an iterative
fashion that requires the edges to be present in memory and random access
to them. 

\paragraph{Our Results:}
Our approach is to
recursively maintain a summary of the graph seen so far and use that
summary itself to decide on the action to be taken on seeing a new edge.
To this end, we modify the sparsification algorithm of Benczur and
Karger~\cite{benczur1996} for the semi--streaming model. The final
algorithm uses a single pass over the edges and provides $1 \pm \epsilon$
approximation for cut values with high probability and uses $\O(n(\log n +
\log m)(\log \frac{m}{n})(1+\epsilon)^2/\epsilon^2)$ edges for $n$ node and
$m$ edge graph.

\section{Background and Notation}
Let $G$ denote the input graph and $n$ and $m$ respectively denote the
number of nodes and edges. $VAL(C,G)$ denotes the value of cut $C$ in $G$.
$w_G(e)$ indicates the weight of $e$ in graph $G$.

\begin{definition}
\cite{benczur1996} A graph is \textbf{k-strong connected} if and only if every cut in the graph has value at least $k$. \textbf{k-strong connected component} is a maximal node-induced subgraph which is k-strong connected. The \textbf{strong connectivity} of an edge $e$ is the maximum $k$ such that there exists a $k$-strong connected component that contains $e$.
\end{definition}

In \cite{benczur1996}, they compute the strong connectivity of each edge and use it to decide the sampling probability. Algorithm \ref{alg:sparsify} is their algorithm. We will modify this in section \ref{sec:algdesc}.

\medskip
\begin{algorithm}[H]\label{alg:sparsify}
\noindent {\bf Benczur-Karger}(\cite{benczur1996})~\\
  \SetLine
  \KwData{Graph $G=(V,E)$}
  \KwResult{Sparsified graph $H$}
    compute the strong connectivity of edge $c^{G}_e$ for all $e\in G$\;
    $H\leftarrow(V,\emptyset)$\;
    \ForEach{$e$} {
      $p_e=\min\{\rho/c_e,1\}$\;
      with probability $p_e$, add $e$ to $H$ with weight $1/p_e$\;
    }
  \caption{Sparsification Algorithm}
\end{algorithm}

Here $\rho$ is a parameter that depends on the size of $G$ and the error
bound $\epsilon$. They proved the following two theorems in their paper.

\begin{theorem}\label{thm:benczur_errorbound}~\cite{benczur1996} Given $\epsilon$ and a corresponding $\rho=16(d+2)(\ln n)/\epsilon^2$, every cut in $H$ has value between $(1-\epsilon)$ and $(1+\epsilon)$ times its value in $G$ with probability $1-n^{-d}$.
\end{theorem}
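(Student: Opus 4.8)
The statement to prove is exactly the sampling analysis of Benczur and Karger, so the plan is to reconstruct their argument, which rests on three ingredients: unbiasedness of the estimator, a Chernoff bound for a single fixed cut, and a union bound over all cuts made to converge by Karger's cut-counting bound. The one genuinely delicate point is that edges are sampled at different rates $p_e=\min\{\rho/c_e,1\}$, so a uniform Chernoff-plus-union-bound does not close; the strong connectivity is precisely the device that tames this.

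First I would record unbiasedness: each retained edge carries weight $1/p_e$, so $\E[VAL(C,H)]=VAL(C,G)$ for every cut $C$, and it remains only to bound the deviation. For a \emph{single} cut $C$ of value $v$, the starting observation is that any edge $e$ crossing $C$ has $c_e\le v$: the $c_e$-strong component containing $e$ is split by $C$ into two nonempty parts, which induces a cut of value at least $c_e$ inside that component, and this induced value is at most $v$. Hence $p_e\ge\rho/v$, so each term $X_e/p_e$ of $VAL(C,H)$ lies in $[0,v/\rho]$ while the mean is $v$. A Chernoff bound for bounded independent variables then gives $\Pr[\,|VAL(C,H)-v|>\epsilon v\,]\le 2\exp(-\Omega(\epsilon^2\rho))$, uniformly in $v$, because the per-edge range scales with $v$. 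With $\rho=16(d+2)(\ln n)/\epsilon^2$ this is $n^{-\Omega(d+2)}$ for any one cut.

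This uniform bound is, however, too weak to survive a union bound over the up to $2^{n-1}$ cuts, so the concentration must be made to \emph{improve} with the size of the cut. Here I would introduce the connectivity decomposition: group edges into geometric classes $E_j=\{e:2^j\le c_e<2^{j+1}\}$, and for each level $j$ consider the subgraph $G_{\ge 2^j}$ of edges with $c_e\ge 2^j$, whose connected components are exactly the $2^j$-strong components and hence have (induced) min cut at least $2^j$. Within a level the sampling rates agree up to a factor of two, so the class-$j$ contribution to a cut of induced value $\alpha\cdot 2^j$ behaves like a near-uniform sample with at least $\alpha\rho$ effective samples, giving a class-$j$ failure probability of $\exp(-\Omega(\epsilon^2\rho\,\alpha))=n^{-\Omega((d+2)\alpha)}$. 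Karger's cut-counting bound supplies at most $n^{2\alpha}$ cuts of induced value at most $\alpha\cdot 2^j$ in $G_{\ge 2^j}$, and since $16(d+2)/3>2$ the weighted union bound becomes a convergent geometric series in $\alpha$, summing to $n^{-\Omega(d+2)}$ per level. Summing the deviations across the $O(\log n)$ levels and absorbing the factor into the exponent yields that \emph{every} cut is preserved to within $(1\pm\epsilon)$ with probability $1-n^{-d}$.

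The main obstacle is the middle of the last paragraph: correctly charging the large range $1/p_e=c_e/\rho$ of high-connectivity edges against the guaranteed largeness of any cut that contains them, so that the Chernoff exponent genuinely scales with the cut's value measured in units of the level threshold $2^j$ rather than being the flat $\Omega(\epsilon^2\rho)$ of the single-cut bound. Making this scaling rigorous uses the laminar nesting of the $2^k$-strong components and a telescoping argument that assigns each edge to its own level; once that self-scaling is in place, balancing Karger's $n^{2\alpha}$ against the $n^{-\Omega((d+2)\alpha)}$ decay is routine bookkeeping.
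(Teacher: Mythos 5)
Your proposal is correct and follows essentially the same route the paper takes: this theorem is cited from Bencz\'ur--Karger, and the paper's Section 3 sketch of their proof (Chernoff bound for a single cut, Karger's $n^{2\alpha}$ cut-counting bound, and the decomposition into $k$-strong components sampled at geometrically varying rates with a telescoping recombination) is exactly the three-ingredient argument you reconstruct. Your identification of the delicate point --- charging the range $c_e/\rho$ of each edge against the guaranteed cut value inside its strong component so the exponent scales with $\alpha$ --- matches the mechanism the paper itself reuses in its streaming adaptation (Lemmas 4.1--4.4, with the $F_{i,j}$ geometric weighting handling the cross-level summation).
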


\begin{theorem}\label{thm:benczur_spacebound}~\cite{benczur1996} With high probability $H$ has $\O(n\rho)$ edges.
\end{theorem}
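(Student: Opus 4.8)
The plan is to write the number of edges retained by the algorithm as a sum of independent indicators and argue in two stages: bound its expectation, then show concentration. For each edge $e$ let $X_e$ be the indicator that $e$ is added to $H$, so that $\E[X_e]=p_e=\min\{\rho/c_e,1\}$ and $|H|=\sum_e X_e$. Because the strong connectivities $c_e$ are computed from $G$ before any coin is tossed, the $X_e$ are mutually independent. By linearity,
\[
\E[|H|]=\sum_e \min\{\rho/c_e,1\}\le \rho\sum_e \frac{1}{c_e}.
\]
Thus everything reduces to the purely combinatorial claim that $\sum_e 1/c_e\le n-1$, which I expect to be the heart of the argument; granting it gives $\E[|H|]\le \rho(n-1)=\O(n\rho)$.

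To establish $\sum_e 1/c_e\le n-1$ I would induct on the number of vertices, charging a minimum cut. Disconnected $G$ is handled componentwise (the bound only improves), so assume $G$ is connected with global min-cut value $\lambda$. I would first record two structural facts: every edge satisfies $c_e\ge\lambda$, since $V$ itself is a $\lambda$-strong component, and every edge crossing a fixed minimum cut $(S,\bar S)$ in fact has $c_e=\lambda$. The latter is the crucial observation: if a crossing edge lay in a $k$-strong component $W$ with $k>\lambda$, then $(W\cap S,\,W\cap\bar S)$ would be a genuine cut of $W$ whose crossing edges form a subset of the $\lambda$ edges crossing $(S,\bar S)$, contradicting that every cut of $W$ exceeds $\lambda$. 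Hence the $\lambda$ crossing edges contribute exactly $\lambda\cdot(1/\lambda)=1$ to the sum.

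I would then split off the two sides. For any edge $e$ lying inside $G[S]$, a $k$-strong component of $e$ computed within $G[S]$ is a node-induced subgraph of $G$ carrying the same internal edges, hence is $k$-strong in $G$ as well; therefore $c^G_e\ge c^{G[S]}_e$ and $1/c^G_e\le 1/c^{G[S]}_e$, and symmetrically for $\bar S$. Summing the crossing contribution and the two induced pieces and applying the inductive hypothesis to $G[S]$ and $G[\bar S]$ yields
\[
\sum_e \frac{1}{c_e}\le 1+(|S|-1)+(|\bar S|-1)=n-1 ,
\]
completing the combinatorial lemma. The main obstacle is getting this monotonicity of strong connectivity under restriction exactly right: deleting edges generally decreases connectivities, and the resolution is that we restrict to \emph{vertex-induced} subgraphs on one side of the cut, so no internal edge is ever lost and the inequality points the correct way.

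Finally I would upgrade the expectation bound to a high-probability statement by a Chernoff argument on the independent indicators $X_e$. With mean $\mu=\E[|H|]\le\rho(n-1)<\rho n$, the multiplicative upper tail gives $\Prob[\,|H|\ge 2e\rho n\,]\le 2^{-2e\rho n}$, so that $|H|=\O(n\rho)$ with probability $1-n^{-\omega(1)}$ once $\rho=\Theta(\epsilon^{-2}\log n)$. This delivers the stated $\O(n\rho)$ edge bound with high probability.
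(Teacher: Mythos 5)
Your proposal is correct, and it is essentially Bencz\'ur and Karger's original argument rather than anything appearing in this paper: the reduction of the expectation bound to the combinatorial fact $\sum_e 1/c_e\le n-1$, the induction that charges a global minimum cut (crossing edges have strong connectivity exactly $\lambda$, and connectivities only grow when passing from $G[S]$ back to $G$ because the components are vertex-induced), and the final Chernoff step over independent indicators are all sound. Note, however, that the paper never proves this theorem itself --- it is imported as a black box from \cite{benczur1996} --- so the relevant comparison is with the space bound the paper \emph{does} prove, Theorem \ref{thm:space} for the streaming Algorithm \ref{alg:streamsparsify}, and there the authors take a genuinely different route, necessarily so: in the streaming algorithm each $p_e$ is computed from the current sample $H$, the indicators are no longer independent, and your Chernoff step is unavailable. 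The paper instead argues deterministically, conditioned on $H\in(1\pm\epsilon)G$: the total weight of $H$ is then at most $(1+\epsilon)m$, the edges sampled while having connectivity at most $k$ carry total weight at most $n(k+k/\rho)$ (via Lemma \ref{thm:connectivity_bound}), and summing over connectivity scales yields the edge count. The trade-off is instructive: your independence-based argument gives an unconditional high-probability bound of $\O(n\rho)$ edges but only works offline, where all connectivities are fixed before any coin is tossed; the paper's weight-counting argument is conditional on the sparsification being correct, but it is robust to the adaptive, history-dependent sampling that the semi-streaming setting forces.
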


\newcommand{\hijp}{H_{i_j-1}}
\newcommand{\hij}{H_{i_j}}
\newcommand{\hijs}{H_{i_{j+1}-1}}

\newcommand{\peij}{p_{e_{i_j}}}

Throughout this paper, $e_1,e_2,\cdots,e_m$ denotes the input sequence.
$G_i$ is a graph that consists of $e_1$,$e_2$,$\cdots$,$e_i$. $c^{(G)}_e$ is the
strong connectivity of $e$ in $G$ and $w_G(e)$ is weight of an edge $e$ in
$G$. $G_{i,j}=\{e:e\in G_i,2^{j-1}\leq c^{(G_i)}_e < 2^j\}$. Each edge has
weight 1 in $G_{i,j}$. $F_{i,j}=\sum_{k\geq j} 2^{j-k}G_{i,j}$ where
scalar multiplication of a graph and addition of a graph is defined as
scalar multiplication and addition of edge weights. In addition, $H\in(1\pm\epsilon)G$ if and only if $(1-\epsilon)VAL(C,G)\leq VAL(C,H)\leq
(1+\epsilon)VAL(C,G)$. $H_i$ is a sparsification of a graph $G_i$, i.e., a
sparsified graph after considering $e_i$ in the streaming model.

\section{A Semi-Streaming Algorithm}\label{sec:algdesc}

We cannot use Algorithm \ref{alg:sparsify} in the streaming model since it
is not possible to compute the strong connectivity of an edge in $G$
without storing all the data. The overall idea would be to use a strongly
recursive process, where we use an estimation of the connectivity based on
the current sparsification and show that subsequent addition of edges does
not impact the process. The modification is not difficult to state, which
makes us believe that such a modification is likely to find use in
practice. The nontrivial part of the algorithm is in the analysis,
ensuring that the various dependencies being built into the process does
not create a problem. For completeness the modifications are 
presented in Algorithm
\ref{alg:streamsparsify}.

\medskip
\begin{algorithm}[H]\label{alg:streamsparsify}
\noindent{\bf Stream-Sparsification}\\
  \SetLine
  \KwData{The sequence of edges $e_1,e_2,\cdots,e_m$}
  \KwResult{Sparsified graph $H$}
    $H\leftarrow\emptyset$\;
    \ForEach{$e$}{
      compute the connectivity $c_e$ of $e$ in $H$\;
      $p_e=\min\{\rho/c_e,1\}$\;
	   add $e$ to $H$ with probability $p_e$ and weight $1/p_e$\;
    }
  \caption{Streaming Sparsification Algorithm}
\end{algorithm}

We use $\rho=32((4+d)\ln n+\ln m)(1+\epsilon)/\epsilon^2$ given
$\epsilon>0$; once again $d$ is a constant which determines the
probability of success. We prove two theorems for Algorithm
\ref{alg:streamsparsify}. The first theorem is about the approximation
ratio and the second theorem is about its space requirement. For the
simplicity of proof, we only consider sufficiently small $\epsilon$.

\begin{theorem}\label{thm:correctness} Given $\epsilon>0$, $H$ is a
  sparsification, that is $H\in(1\pm\epsilon)G$, with probability
  $1-\O(1/n^d)$.
\end{theorem}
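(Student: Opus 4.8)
The plan is to prove the statement by induction on the stream, establishing the stronger invariant that every prefix sparsification satisfies $H_i \in (1\pm\epsilon)G_i$ simultaneously, and then reading off the case $i=m$. The base case $H_0=\emptyset=G_0$ is trivial. For the inductive step I would condition on the event $\mathcal{G}_{i-1}$ that $H_{i-1}\in(1\pm\epsilon)G_{i-1}$ and argue that the connectivity $c_{e_i}$ the algorithm computes in $H_{i-1}$ is a faithful proxy for the strong connectivity that the Benczur-Karger algorithm would have used in $G_m$. Two facts drive this. First, strong connectivity is monotone under edge insertions, so $c^{(G_{i-1})}_{e_i}\le c^{(G_m)}_{e_i}$, meaning that estimating from a prefix can only increase the sampling probability $\rho/c_{e_i}$; over-sampling is harmless for correctness. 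Second, since all cuts of $H_{i-1}$ lie within $(1\pm\epsilon)$ of those in $G_{i-1}$, strong connectivities are preserved up to the same factor, so $c_{e_i}\le(1+\epsilon)c^{(G_{i-1})}_{e_i}$ (this cut-to-connectivity transfer is the one place that needs a small lemma of its own). The residual $(1+\epsilon)$ factor is exactly what the extra $(1+\epsilon)$ in $\rho=32((4+d)\ln n+\ln m)(1+\epsilon)/\epsilon^2$ absorbs, guaranteeing that the streaming probabilities dominate the correct ones for $G_m$.

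The heart of the argument is a concentration bound for a single cut that must tolerate the adaptive nature of the sampling, since $p_{e_i}$ depends on the random choices made for $e_1,\dots,e_{i-1}$ through $H_{i-1}$. I would fix a cut $C$ and track $X_i=VAL(C,H_i)$. An arriving edge $e_i\in C$ is kept with probability $p_{e_i}$ and then assigned weight $1/p_{e_i}$, so its conditional expected contribution is exactly $1$; hence $Y_i=VAL(C,H_i)-VAL(C,G_i)$ is a martingale with respect to the filtration generated by the sampling decisions. The increments are bounded, since an edge in connectivity class $j$ contributes at most $1/p_{e_i}\approx 2^j/\rho$, and its conditional variance is at most $1/p_{e_i}-1\le 2^j/\rho$. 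Decomposing $C$ by connectivity class and using the $F_{i,j}$ construction, which normalizes each class to an effective min-cut $\Theta(2^j)$, lets me bound the total conditional variance within a class by essentially the cut value itself; a Freedman-type martingale inequality then yields a Chernoff-quality tail matching the i.i.d.\ Benczur-Karger setting.

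With the per-cut bound in hand, I would apply Karger's cut-counting bound inside each $2^{j-1}$-strong component: the number of cuts of value at most $\alpha$ times the component min-cut is $n^{\O(\alpha)}$, which is matched by the exponent in the tail bound, so a union bound over all cuts and over the $\O(\log n)$ connectivity scales controls the failure probability for step $i$. Finally I would union-bound over the $m$ stream positions, which is where the $\ln m$ term in $\rho$ originates, to conclude that every $H_i$ is good, and in particular that $\mathcal{G}_m$ holds with probability $1-\O(1/n^d)$.

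I expect the main obstacle to be the circularity created by adaptivity: the martingale's increment and variance bounds rely on the sampling probabilities not being too small, which in turn relies on the connectivity estimates being accurate, yet that accuracy is precisely the conclusion we are establishing for the current prefix. Breaking this loop cleanly, by conditioning on the good event for all earlier prefixes and verifying that this conditioning does not distort the martingale increments of the current step, is the delicate part; everything after that is a careful but faithful re-run of the Benczur-Karger cut-counting analysis.
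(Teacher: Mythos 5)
Your proposal is correct and follows essentially the same route as the paper's proof: the first-failure/induction over stream prefixes, the connectivity transfer $c_{e_j}\le(1+\epsilon)c^{(G_j)}_{e_j}\le(1+\epsilon)c^{(G_i)}_{e_j}$ whose $(1+\epsilon)$ loss is absorbed by the extra factor in $\rho$, the $F_{i,j}$ decomposition into $2^{j-1}$-strong components, Karger's cut-counting bound within each component, and union bounds over components, scales, and the $m$ stream positions. The only substantive difference is that you invoke a Freedman-type martingale inequality (in its intersection-event form) where the paper hand-rolls the equivalent bound via conditional moment-generating functions of capped variables $Y_j$ (set to $X_j$ when $p_{e_{i_j}}\ge p$ and to $\mu_j$ otherwise), which is precisely how it sidesteps the adaptivity/conditioning circularity you flag as the delicate step.
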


\begin{theorem}\label{thm:space} If $H\in(1\pm\epsilon)G$, $H$ has $\O(n(d\log n+\log m)(\log m-\log n)(1+\epsilon)^2/\epsilon^2)$ edges.
\end{theorem}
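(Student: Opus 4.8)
The plan is to bound $|H|$, the number of sampled edges, by reducing it to a weighted strong-connectivity sum and then controlling that sum via a dyadic decomposition over connectivity levels. First I would exploit the correctness analysis behind Theorem \ref{thm:correctness}, which in fact establishes the stronger invariant that every prefix sparsification satisfies $H_{i-1}\in(1\pm\epsilon)G_{i-1}$, not merely the final $H\in(1\pm\epsilon)G$. Since cut approximation preserves strong components up to a $(1\pm\epsilon)$ factor, the connectivity estimate used at step $i$ obeys $c_{e_i}\geq(1-\epsilon)c^{(G_i)}_{e_i}$ (the extra edge $e_i$ itself perturbs connectivity by at most one, which is negligible against $\rho$). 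Hence whenever $p_{e_i}<1$ we have $p_{e_i}\leq\frac{\rho}{(1-\epsilon)c^{(G_i)}_{e_i}}$, so that $\E|H|=\sum_i p_{e_i}\leq\frac{\rho}{1-\epsilon}\sum_i \frac{1}{c^{(G_i)}_{e_i}}$. It therefore suffices to bound $\Sigma:=\sum_i 1/c^{(G_i)}_{e_i}$, and to convert the resulting expectation bound into a high-probability statement at the very end by a concentration argument on the keep-indicators.

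Next I would assign each $e_i$ to a dyadic level $j$ when $c^{(G_i)}_{e_i}\in[2^{j-1},2^j)$ and write $M_j$ for the number of level-$j$ edges, so that $\Sigma\leq\sum_j M_j/2^{j-1}$ and $\sum_j M_j=m$. Granting the counting estimate $M_j=\O(n2^j)$ (discussed below), each level contributes $M_j/2^{j-1}=\O(n)$, so I split the levels at $J^*=\lceil\log(m/n)\rceil$: the low levels contribute $\O(nJ^*)=\O(n(\log m-\log n))$ in total, while the high levels contribute at most $2^{-J^*}\sum_{j>J^*}M_j=\O(n)$ because $2^{J^*}\gtrsim m/n$ and $\sum_j M_j=m$. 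This yields $\Sigma=\O(n(\log m-\log n))$, which is precisely the source of the extra $(\log m-\log n)$ factor relative to the $\O(n\rho)$ bound of Theorem \ref{thm:benczur_spacebound}; the gap arises because an edge arriving early can have small connectivity in $G_i$ even though its connectivity in $G$ is large, so the single-graph bound $\sum_e 1/c^{(G)}_e\leq n-1$ does not suffice.

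The counting lemma $M_j=\O(n2^j)$ is the crux and the main obstacle. The difficulty is that arrival connectivity $c^{(G_i)}_{e_i}$ is measured against the \emph{entire} current graph $G_i$ (high-level edges included), not against the subgraph of low-level insertions alone, so the static bound on $\{e:c^{(G_m)}_e<2^j\}$ does not transfer and indeed points the wrong way. My plan here is to show that the edges counted in $M_{\leq j}=\sum_{j'\leq j}M_{j'}$ form, at every prefix, a subgraph of arboricity $\O(2^j)$: any vertex subset that had accumulated more than $c\cdot 2^j$ edges per vertex among these insertions would already have constituted a $2^j$-strong component before the last such insertion, contradicting that insertion having arrival connectivity below $2^j$. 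By Nash--Williams this caps $M_{\leq j}$, and hence $M_j$, at $\O(2^j(n-1))$. Making this precise is exactly where the capped graphs $F_{i,j}$ and $G_{i,j}$ are needed: one telescopes the contributions across levels $k\geq j$ using the weights $2^{j-k}$ so that the connectivity certificate argument can be reconciled with the full current graph rather than the isolated low-level subgraph.

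Finally I would assemble the pieces. Combining the first two steps gives $\E|H|=\O\!\left(\rho\, n(\log m-\log n)/(1-\epsilon)\right)$, and substituting $\rho=32((4+d)\ln n+\ln m)(1+\epsilon)/\epsilon^2$ together with $1/(1-\epsilon)=\O(1+\epsilon)$ for small $\epsilon$ produces the claimed $\O(n(d\log n+\log m)(\log m-\log n)(1+\epsilon)^2/\epsilon^2)$. A Chernoff or bounded-martingale argument on the independent keep-decisions then upgrades this expectation to a high-probability bound, so that conditioned on the success event $H\in(1\pm\epsilon)G$ the stated edge count holds, matching Theorem \ref{thm:space}.
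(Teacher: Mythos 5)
Your overall architecture---bound the expected number of sampled edges by $\frac{\rho}{1-\epsilon}\sum_i 1/c^{(G_i)}_{e_i}$ and then control that sum by dyadic connectivity levels---is genuinely different from the paper's proof, and your counting lemma $M_{\leq j}=\O(n2^j)$ is correct: it follows from exactly the first-edge-to-complete-a-strong-component argument together with Lemma \ref{thm:connectivity_bound} (a graph with no $2^j$-strong component has total weight below $n(2^j-1)$); no arboricity machinery is needed. The genuine gap is in the step that feeds into it: you assert that the prefix invariant $H_{i-1}\in(1\pm\epsilon)G_{i-1}$ gives $c_{e_i}\geq(1-\epsilon)c^{(G_i)}_{e_i}$ ``since cut approximation preserves strong components up to a $(1\pm\epsilon)$ factor.'' That implication is false in general. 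Strong connectivity is defined through cuts of node-induced subgraphs, and those are not cuts of the whole graph: if $(A,Q\setminus A)$ is a cut inside a strong component $Q$ and the vertices of $A$ carry a large weight of edges to $V\setminus Q$, then whole-graph cut preservation only controls the sum $w_H(A,Q\setminus A)+w_H(A,V\setminus Q)$, and an error of order $\epsilon\, w_G(A,V\setminus A)$---which can dwarf $w_G(A,Q\setminus A)$---can be hidden entirely inside the induced cut. So $c_{e_i}$ could in principle fall far below $(1-\epsilon)c^{(G_i)}_{e_i}$, your bound $p_{e_i}\leq\rho/((1-\epsilon)c^{(G_i)}_{e_i})$ fails, and the expectation bound collapses. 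Repairing this requires the finer per-level invariant that the correctness analysis actually establishes (preservation of cuts of the graphs $F_{i,j}$, which are unions of strong components), not the bare whole-graph statement $H_{i-1}\in(1\pm\epsilon)G_{i-1}$ that you invoke.

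It is worth seeing how the paper sidesteps this entirely: its proof never compares connectivities in $H$ with connectivities in $G$. Working only with the quantities the algorithm itself computed, it sets $E_k=\{e\in H: c_e\leq k\}$ and argues that the first edge completing a $(k+k/\rho+1)$-strong component among these edges would have had connectivity greater than $k$ in the sparsifier at its arrival, a contradiction; Lemma \ref{thm:connectivity_bound} then caps the total weight of $E_k$ by $n(k+k/\rho)$. The hypothesis $H\in(1\pm\epsilon)G$ enters only through the single-vertex (degree) cuts---which are genuine whole-graph cuts---to bound the total weight of $H$ by $(1+\epsilon)m$, after which a worst-case packing and harmonic-sum argument gives the edge bound deterministically. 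This also exposes a secondary mismatch: the theorem is a deterministic implication (every outcome with $H\in(1\pm\epsilon)G$ has few edges), whereas your argument is inherently probabilistic (expectation plus concentration, conditioned on a high-probability invariant), so even after patching the connectivity step you would be proving a with-high-probability variant rather than the statement as written.
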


We use a sequence of ideas similar to that in Benczur and
Karger~\cite{benczur1996}. Let us first discuss the proof in
\cite{benczur1996}.

In that paper, Theorem \ref{thm:benczur_errorbound} is proved on three
steps. First, the result of Karger~\cite{karger1994}, on uniform sampling
is used. This presents two problems. The first is that they
need to know the value of minimum cut to get a constant error bound. The
other is that the number of edges sampled is too large. In worst case,
uniform sampling gains only constant factor reduction in number of edges.

To solve this problem, Benczur and Karger~\cite{benczur1996} decompose a
graph into $k$-strong connected components. In a $k$-strong connected
component, minimum-cut is at least $k$ while the maximum number of edges
in $k$-strong connected component(without $(k+1)$-strong connected
component as its subgraph) is at most $kn$. They used the uniform sampling
for each component and different sampling rate for different components.
In this way, they guarantee the error bound for every cut.

We cannot use Karger's result \cite{karger1994} directly to prove our
sparsification algorithm because the probability of sampling an edge
depends on the sampling results of previous edges. We show that the error
bound of a single cut by a suitable bound on the martingale process. Using that
we prove that if we do not make an error until $i^{\rm th}$ edge, we
guarantee the same error bound for every cut after sampling $(i+1)^{\rm
  th}$ edge with high probability. Using union bound, we prove that our
sparsification is good with high probability.

\section{Proof of Theorem \ref{thm:correctness}}

\subsection{Single Cut}

We prove Theorem \ref{thm:correctness} first. First, we prove the
error bound of a single cut in Lemma \ref{thm:errorbound_singlecut}. The
proof will be similar to that of Chernoff bound~\cite{chernoff1952}. $p$
in Lemma \ref{thm:errorbound_singlecomponent} is a parameter and we use
different $p$ for different strong connected components in the later
proof.

\begin{lemma}\label{thm:errorbound_singlecut} Let
  $C=\{e_{i_1},e_{i_2},\cdots,e_{i_l}\}$ with $i_1<i_2<\cdots<i_l$ be a
  cut in a graph $G$ such that $w_G(e_{i_j})\leq 1$ and $VAL(C,G)=c$. The
  index of the edges corresponds to the arrival order of the edges in the
  data stream. Let $A_{C}$ be an event such that $p_e\geq p$ for all $e\in
  C$. Let $H$ be a sparsification of $G$. Then,
  $\Prob[A_{C}\wedge(|VAL(C,H)-c|>\beta c)]< 2\exp(-\beta^2pc/4)$ for any
  $0<\beta\leq 2e-1$.
\end{lemma}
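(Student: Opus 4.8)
The plan is to treat $VAL(C,H)$ as a sum of scaled, \emph{adaptively} sampled Bernoulli variables and to reproduce the Chernoff argument through a conditioning (martingale) argument rather than an independence argument. Write $VAL(C,H)=\sum_{j=1}^{l} Z_j$, where $Z_j = X_j\, w_G(e_{i_j})/p_{e_{i_j}}$ and $X_j\in\{0,1\}$ indicates whether $e_{i_j}$ is kept; thus $Z_j$ equals $a_j/q_j$ with probability $q_j$ and $0$ otherwise, abbreviating $a_j = w_G(e_{i_j})\le 1$ and $q_j = p_{e_{i_j}}$. Let $\mathcal{F}_t$ be the history of all sampling decisions made for $e_1,\dots,e_t$, and set $\mathcal{H}_j=\mathcal{F}_{i_j-1}$, the history just before $e_{i_j}$ is processed. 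Three structural facts drive the proof: (i) $q_j$ is $\mathcal{H}_j$-measurable, since it is computed from the current sparsifier, so $\E[Z_j\mid\mathcal{H}_j]=a_j$ and hence $\E[VAL(C,H)]=\sum_j a_j = c$; (ii) the event $A_C=\bigcap_j\{q_j\ge p\}$ is an intersection of $\mathcal{H}_j$-measurable (predictable) events; and (iii) on $A_C$ the jump size is bounded, $a_j/q_j\le 1/p$, because $a_j\le 1$ and $q_j\ge p$.

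For the upper tail I would bound, for $t>0$,
\[
\Prob\big[A_C\wedge (VAL(C,H)>(1+\beta)c)\big]\le e^{-t(1+\beta)c}\,\E\big[\mathbf{1}_{A_C}\,e^{t\sum_j Z_j}\big],
\]
and estimate the expectation by peeling one factor at a time. Conditioning on $\mathcal{H}_l$, every factor except $e^{tZ_l}$ is $\mathcal{H}_l$-measurable (each $\mathbf{1}_{\{q_j\ge p\}}$ with $j\le l$ and each $Z_j$ with $j<l$ is determined by time $i_l-1$), so it suffices to control $\mathbf{1}_{\{q_l\ge p\}}\,\E[e^{tZ_l}\mid\mathcal{H}_l]$. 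Here $\E[e^{tZ_l}\mid\mathcal{H}_l]=1+q_l\big(e^{t a_l/q_l}-1\big)\le \exp\big(q_l(e^{t a_l/q_l}-1)\big)$; writing $x=t a_l/q_l$ and using that $(e^{x}-1)/x$ is increasing together with $x\le t/p$ on $\{q_l\ge p\}$, the exponent is at most $a_l\, p\,(e^{t/p}-1)$, a deterministic quantity. Iterating the tower property down to $j=1$ yields $\E[\mathbf{1}_{A_C}e^{t\sum_j Z_j}]\le \exp\big(cp(e^{t/p}-1)\big)$ — exactly the moment generating function that would arise if every cut edge were kept at the extremal rate giving jump $1/p$.

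Substituting $t=p\lambda$ and performing the standard Chernoff optimization $\lambda=\ln(1+\beta)$ gives
\[
\Prob\big[A_C\wedge (VAL(C,H)>(1+\beta)c)\big]\le \exp\big(cp[\beta-(1+\beta)\ln(1+\beta)]\big),
\]
and the elementary inequality $\beta-(1+\beta)\ln(1+\beta)\le -\beta^2/4$, valid on $0<\beta\le 2e-1$, reduces this to $\exp(-\beta^2 pc/4)$. The lower tail is handled identically using $e^{-t\sum_j Z_j}$: the same monotonicity step (now of $(1-e^{-x})/x$) bounds each conditional MGF by $\exp\big(a_j p(e^{-t/p}-1)\big)$, and optimizing gives the even smaller bound $\exp(-\beta^2 pc/2)$. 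A union bound over the two tails produces the factor $2$ and the stated estimate.

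The main obstacle — and the only place the streaming, adaptive nature genuinely bites — is facts (ii)–(iii). Because $q_j$ depends on the outcomes of \emph{all} previously processed edges, cut and non-cut alike, the $Z_j$ are neither independent nor of fixed range, so the textbook Chernoff bound does not apply. Isolating $A_C$ as a predictable event is precisely what lets me (a) carry the conditioning through the tower property without the indicators destroying $\mathcal{H}_l$-measurability, and (b) replace the random jump $a_j/q_j$ by the deterministic bound $1/p$ inside each conditional moment generating function. Once the conditional MGF is dominated by the worst-case jump, the argument collapses to the ordinary Chernoff computation, which explains why the bound depends on the cut only through its value $c$ and the uniform lower bound $p$ on the sampling probabilities.
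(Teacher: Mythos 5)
Your proposal is correct and takes essentially the same route as the paper: a Chernoff bound pushed through the filtration of sparsifier histories, with a per-edge conditional MGF bound (exploiting $w_G(e_{i_j})\le 1$ and $p_{e_{i_j}}\ge p$) combined via the tower property and finished by the standard optimization over $t$, yielding the same $\exp(-\beta^2 pc/4)$ tails. The only real difference is bookkeeping: you keep the predictable indicators $\mathbf{1}_{\{p_{e_{i_j}}\ge p\}}$ inside the expectation and peel them off term by term, whereas the paper defines surrogate variables $Y_j$ (equal to $X_j$ when $p_{e_{i_j}}\ge p$ and to the conditional mean $\mu_j$ otherwise) so that the event $A_C$ can be discarded at the outset; the two devices are interchangeable and serve the identical purpose of replacing the random jump by its deterministic worst case.
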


Let $X_j=pw_H(e_{i_j})$ and $\mu_j=\E[X_j]=pw_G(e_{i_j})$. Then, $|VAL(C,H)-c|>\beta c$ if and only if $|\sum_j X_j-pc|>\beta pc$. As already mentioned, we cannot apply Chernoff bound because there are two problems:
\begin{enumerate}
\item $X_j$ are not independent from each other and
\item values of $X_j$ are not bounded.
\end{enumerate}
The second problem is easy to solve because we have $A_{C}$. Let $Y_j$ be random variables defined as follows:
$$
Y_j = \left\{ 
  \begin{array}{ll}
    X_j & {\rm if~}p_{e_{i_j}}\geq p \\
    \mu_j & {\rm otherwise.}
  \end{array}
\right.
$$
If $A_{C}$ happens, $Y_j=X_j$. Thus,
\begin{eqnarray}
\Prob[A_{C}\wedge(|VAL(C,H)-c|>\beta c)]
 & = & \Prob[A_{C}\wedge(|\sum_j X_j-\sum_j \mu_j|>\beta pc)] \nonumber \\
 & = & \Prob[A_{C}\wedge(|\sum_j Y_j-\sum_j \mu_j|>\beta pc)] \nonumber \\
 & \leq & \Prob[|\sum_j Y_j-\sum_j \mu_j|>\beta pc] \label{eqn:conclusion}
\end{eqnarray}

The proof of (\ref{eqn:conclusion}) is similar to Chernoff bound~\cite{chernoff1952}. However, since we do not have independent Bernoulli random variables, we need to prove the upperbound of $\E[\exp(t\sum_j Y_j)]$ given $t$. We start with $\E[\exp(tY_j)]$.

\begin{lemma}\label{thm:yibound} $\E[\exp(tY_j)|\hijp]\leq \exp(\mu_j(e^t-1))$ for any $t$ and $\hijp$.
\end{lemma}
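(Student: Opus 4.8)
The plan is to condition on the entire history $\hijp$ and then follow the two cases in the definition of $Y_j$. Once $\hijp$ is fixed, the connectivity $c_{e_{i_j}}$ of $e_{i_j}$ in $\hijp$ is determined, so the sampling probability $\peij=\min\{\rho/c_{e_{i_j}},1\}$ is a fixed number, which I will call $q$. Consequently $X_j=p\,w_H(e_{i_j})$ is, conditionally, a simple two–point random variable taking the value $p\,w_G(e_{i_j})/q$ with probability $q$ and $0$ with probability $1-q$; this is consistent with $\E[X_j\mid\hijp]=p\,w_G(e_{i_j})=\mu_j$.

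If $q<p$, then $\peij<p$, so by definition $Y_j=\mu_j$ is constant given $\hijp$ and $\E[\exp(tY_j)\mid\hijp]=\exp(t\mu_j)$. Since $t\le e^t-1$ for all $t$ and $\mu_j\ge 0$, this is at most $\exp(\mu_j(e^t-1))$, as required. The interesting case is $q\ge p$, where $Y_j=X_j$. Here
\[
\E[\exp(tY_j)\mid\hijp]=q\exp\!\Big(\tfrac{t\,p\,w_G(e_{i_j})}{q}\Big)+(1-q)=1+q\Big(\exp\!\big(\tfrac{t\mu_j}{q}\big)-1\Big).
\]
Writing $a:=\mu_j/q=p\,w_G(e_{i_j})/q$, I will invoke the chord (convexity) inequality $\exp(ta)\le 1+a(e^t-1)$, valid for every $t$ whenever $a\in[0,1]$, to bound $q(\exp(ta)-1)$ by $qa(e^t-1)=\mu_j(e^t-1)$; combined with $1+x\le e^x$ this gives $\E[\exp(tY_j)\mid\hijp]\le\exp(\mu_j(e^t-1))$.

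The crux is verifying $a\in[0,1]$: this is exactly where the hypothesis $w_G(e_{i_j})\le 1$ and the case assumption $q\ge p$ are both needed, since $a=p\,w_G(e_{i_j})/q\le (p/q)\cdot 1\le 1$. The remaining point to be careful about is the conditioning itself — that $\peij$, and hence the two–point law of $X_j$, are genuinely deterministic once $\hijp$ is revealed — which is precisely what will let this single-edge estimate be chained into a martingale-type bound on $\sum_j Y_j$ when establishing (\ref{eqn:conclusion}).
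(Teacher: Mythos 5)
Your proof is correct and follows essentially the same route as the paper: condition on the history so that $p_{e_{i_j}}$ is deterministic, split into the cases $p_{e_{i_j}}<p$ and $p_{e_{i_j}}\geq p$, bound the conditional moment generating function of the two-point distribution by $1+\mu_j(e^t-1)$, and finish with $1+x\leq e^x$. The only difference is the justification of the middle inequality — you use the convexity chord bound $e^{ta}\leq 1+a(e^t-1)$ for $a=\mu_j/p_{e_{i_j}}\in[0,1]$, whereas the paper shows $f(x)=x\exp(t\mu_j/x)+(1-x)$ is decreasing and evaluates it at $x=\mu_j$; these are interchangeable one-step substitutes, and your hypotheses ($w_G(e_{i_j})\leq 1$ and the case assumption) enter in exactly the same place.
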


\begin{proof} There are two cases. Given $\hijp$, $p_{e_{i_j}}\geq p$ or $p_{e_{i_j}}<p$. At the end of each case, we use the fact that $1+x<e^x$.

Case 1 : If $p_{e_{i_j}}<p$, $Y_j=\mu_j$.
\begin{eqnarray}
\E[\exp(tY_j)|\hijp]
 & = & \exp(t\mu_j) \nonumber \\
 & < & \exp(\mu_j(e^t-1)). \nonumber
\end{eqnarray}

Case 2 : If $\peij\geq p$, $Y_j=X_j$. So $\E[\exp(tY_j)|\hijp]=\peij\exp(t\mu_j/\peij)+(1-\peij)$. Let $f(x)=x\exp(t\mu_j/x)+(1-x)$. Observe that $f'(x)\leq 0$ for $x>0$. So $f(x)$ is decreasing function. Also we have $\mu_j=pw_G(e_{i_j})\leq p\leq \peij$ since  $w_G(e_{i_j})\leq 1$. Hence,
$$\peij\exp(t\mu_j/\peij)+(1-\peij)\leq \mu_j\exp(t)+(1-\mu_j).$$
Therefore,
\begin{eqnarray}
\E[\exp(tY_j)|\hijp] & \leq & \mu_j(\exp(t)-1)+1 \nonumber \\
                     & \leq & \exp(\mu_j(e^t-1)). \nonumber
\end{eqnarray}
From case 1 and 2, $\E[\exp(tY_j)|\hijp]\leq\exp(\mu_j(e^t-1))$ for any $\hijp$.
\end{proof}

Now, we prove the upperbound of $\E[\exp(t\sum_j Y_j)]$.

\begin{lemma}\label{thm:sibound} Let $S_j=\sum_{k=j}^{l}Y_k$. For any $t$ and $\hijp$, $\E[\exp(tS_j)|\hijp]\leq\exp(\sum_{k=j}^{l}\mu_j(e^t-1))$.
\end{lemma}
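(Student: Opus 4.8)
The plan is to prove this by a backward induction on $j$, peeling off one factor $Y_j$ at a time and using the tower property of conditional expectation together with Lemma~\ref{thm:yibound}. Write $S_j = Y_j + S_{j+1}$, so that $\exp(tS_j) = \exp(tY_j)\exp(tS_{j+1})$. The base case $j=l$ is exactly Lemma~\ref{thm:yibound}, since $S_l = Y_l$. For the inductive step I would assume the claim for $j+1$ --- namely that $\E[\exp(tS_{j+1})\mid\hijs]\leq\exp(\sum_{k=j+1}^{l}\mu_k(e^t-1))$ for every history $\hijs$ --- and deduce it for $j$.

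The crucial structural observation is that the relevant histories are nested: since $i_j < i_{j+1}$, the sparsification $\hijp$ seen just before $e_{i_j}$ arrives is a prefix of $\hijs$, the sparsification seen just before $e_{i_{j+1}}$ arrives. In particular $Y_j$ --- which is determined by $\hijp$ (through $\peij$) together with the single coin flip at step $i_j$ --- is measurable with respect to $\hijs$. Conditioning on $\hijs$ and applying the tower property gives
\[
\E[\exp(tS_j)\mid\hijp] = \E\big[\,\E[\exp(tY_j)\exp(tS_{j+1})\mid\hijs]\,\big|\,\hijp\big] = \E\big[\exp(tY_j)\,\E[\exp(tS_{j+1})\mid\hijs]\,\big|\,\hijp\big],
\]
where $\exp(tY_j)$ pulls out of the inner conditional expectation by its $\hijs$-measurability.

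Now the inductive hypothesis bounds the inner factor by the constant $\exp(\sum_{k=j+1}^l \mu_k(e^t-1))$, which does not depend on the random history $\hijs$; since $\exp(tY_j)\geq 0$, this constant factors out of the outer expectation, leaving $\E[\exp(tS_j)\mid\hijp] \leq \exp(\sum_{k=j+1}^l\mu_k(e^t-1))\,\E[\exp(tY_j)\mid\hijp]$. Applying Lemma~\ref{thm:yibound} to the remaining factor, $\E[\exp(tY_j)\mid\hijp]\leq\exp(\mu_j(e^t-1))$, and multiplying the two exponentials completes the step.

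I expect the only delicate point to be the conditioning and measurability bookkeeping in the middle paragraph: one must verify that the filtration really is nested in arrival order ($\hijp\subseteq\hijs$), that $Y_j$ is $\hijs$-measurable so that it can be treated as a constant inside the inner expectation, and that the inductive bound is deterministic so that it survives being pulled through the outer expectation over $\hijs$. Once these are pinned down, the argument is a routine iteration of Lemma~\ref{thm:yibound}. (I note that the right-hand side of the statement should read $\sum_{k=j}^{l}\mu_k(e^t-1)$ rather than $\mu_j$; the proof yields exactly this.)
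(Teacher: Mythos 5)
Your proof is correct and takes essentially the same route as the paper's: a backward induction on $j$ in which the inductive bound for $S_{j+1}$ conditional on the intermediate history $\hijs$ is combined with Lemma~\ref{thm:yibound}, the only difference being that you phrase the step via the tower property and $\hijs$-measurability of $Y_j$ while the paper writes the identical computation as an explicit double sum over values $y$ of $Y_j$ and histories $\hijs$. You are also right that $\mu_j$ in the statement is a typo for $\mu_k$; the paper's own derivation (modulo two small typos of its own) yields exactly $\exp\left(\sum_{k=j}^{l}\mu_k(e^t-1)\right)$.
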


\begin{proof} We prove by induction. For $j=l$, $\E[\exp(tS_j)|\hijp]=\E[\exp(tY_l)|\hijp]\leq\exp(\mu_l(e^t-1))$ by Lemma \ref{thm:yibound}.

Assume that $\E[\exp(tS_{j+1})|\hijs]\leq\exp(\sum_{k=j+1}^{l}\mu_k(e^t-1))$ for any $\hijs$. Then,
\begin{eqnarray}
\E[\exp(tS_j)|\hijp]
 & = & \sum_{y} \Prob[Y_j=y|\hijp] \sum_{\hijs} \E[\exp(t(y+S_{j+1}))|\hijs]\Prob[\hijs|Y_j=y,\hijp] \nonumber \\
 & = & \sum_{y} \exp(ty) \Prob[Y_j=y|\hijp] \sum_{\hijs} \E[\exp(tS_{j+1})|\hijs]\Prob[\hijs|Y_j=y,\hijp] \nonumber \\
 & \leq & \sum_{y} \Prob[Y_j=y|\hijp] \exp\left(\sum_{k=j+1}^{l}\mu_k(e^t-1)\right) \nonumber \\
 & = & \exp\left(\sum_{k=j+1}^{l}\mu_k(e^t-1)\right) \E[Y_j|\hijp] \nonumber \\
 & \leq & \exp\left(\sum_{k=j}^{l}\mu_k(e^t-1)\right) \nonumber
\end{eqnarray}
Therefore, $\E[\exp(tS_j)|\hijp]\leq\exp(\sum_{k=j}^{n}\mu_k(e^t-1))$ for any $\hijp$ and $t$.
\end{proof}

Now we prove Lemma \ref{thm:errorbound_singlecut}. Remember that we only need to prove $\Prob[|\sum_j Y_j-pc|>\beta pc]< 2\exp(-\beta^2pc/4)$ by (\ref{eqn:conclusion}).

\begin{proof}[Proof of Lemma \ref{thm:errorbound_singlecut}] Let $S=S_1=\sum_j Y_j$ and $\mu=\sum_j \mu_j=pc$. We prove in two parts: $\Prob[S>(1+\beta)\mu]\leq \exp(-\beta^2\mu/4)$ and $\Prob[S<(1-\beta)\mu]\leq \exp(-\beta^2\mu/4)$.

We prove $\Prob[S>(1+\beta)\mu]<\exp(-\beta^2\mu/4)$ first. By applying Markov's inequality to $\exp(tS)$ for any $t>0$, we obtain
\begin{eqnarray}
\Prob(S>(1+\beta)\mu)
 & < & \frac{\E[\exp(tS)]}{\exp(t(1+\beta)\mu)} \nonumber \\
 & \leq & \frac{\exp(\mu(e^t-1))}{\exp(t(1+\beta)\mu)}. \nonumber
\end{eqnarray}
The second line is from Lemma \ref{thm:sibound}. From this point, we have
identical proof as Chernoff bound~\cite{chernoff1952} that gives us bound
$\exp(-\beta^2\mu/4)$ for $\beta<2e-1$. To prove that
$\Prob[S<(1-\beta)\mu]<\exp(-\beta^2pc/4)$ we applying Markov's inequality
to $\exp(-tS)$ for any $t>0$, and proceed similar to above.
Using union bound to these two bounds, we obtain a bound of $2\exp(-\beta^4\mu/4)$.
\end{proof}

\subsection{$k$-strong Connected Component}

Now we prove the following lemma given a $k$-strong connected component and parameter $p$. This corresponds to the proof of uniform sampling method in \cite{karger1994}.

\begin{lemma}\label{thm:errorbound_singlecomponent} Let $Q$ be a $k$-strong component such that each edge has weight at most 1. $H_Q$ is its sparsified graph. Let $\beta=\sqrt{4((4+d)\ln n + \ln m)/pk}$ for some constant $d>0$. Suppose that $A_Q$ be an event such that every edge in $Q$ has sampled with probability at least $p$. Then, $\Prob[A_Q\wedge (H_Q\notin (1\pm\epsilon)Q)]=\O(1/n^{2+d}m)$.
\end{lemma}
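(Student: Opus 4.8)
The plan is to derive the statement as a union bound over all cuts of $Q$, applying the single-cut estimate of Lemma \ref{thm:errorbound_singlecut} to each cut and controlling how many cuts occur at each scale via Karger's cut-counting bound. The first step is to observe that the event $A_Q$ implies the event $A_C$ of Lemma \ref{thm:errorbound_singlecut} for \emph{every} cut $C$ of $Q$: the edges crossing $C$ are a subset of the edges of $Q$, so if every edge of $Q$ is sampled with probability at least $p$, then $p_e\geq p$ for all $e\in C$. Since each edge has weight at most $1$, the hypotheses of Lemma \ref{thm:errorbound_singlecut} hold, and for a cut $C$ with $c=VAL(C,Q)$ we get
$$\Prob[A_Q\wedge(|VAL(C,H_Q)-c|>\beta c)]\leq\Prob[A_C\wedge(|VAL(C,H_Q)-c|>\beta c)]<2\exp(-\beta^2 pc/4).$$
I would prove the bound for the relative error $\beta$ and then transfer it to $\epsilon$: since the parameters are chosen so that $\beta\leq\epsilon$, we have $\{H_Q\notin(1\pm\epsilon)Q\}\subseteq\{H_Q\notin(1\pm\beta)Q\}$, and the latter is exactly the union over all cuts $C$ of the events $|VAL(C,H_Q)-c|>\beta c$. (A minor preliminary check is that $\beta\leq\epsilon\leq 2e-1$ for sufficiently small $\epsilon$ and large enough $pk$, so that Lemma \ref{thm:errorbound_singlecut} is applicable at every scale.)

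Next I would bring in the two ingredients that make the sum over cuts converge. Because $Q$ is $k$-strong connected, its minimum cut has value at least $k$, so every cut value is at least $k$ and Karger's cut-counting bound \cite{karger1994} applies: the number of cuts of $Q$ with value at most $\alpha k$ is at most $n^{2\alpha}$. The definition of $\beta$ is calibrated precisely against this count. Writing $L=(4+d)\ln n+\ln m$, we have $\beta^2 pk=4L$, so a cut of value $c\geq\alpha k$ fails with probability at most $2\exp(-L\alpha)=2n^{-(4+d)\alpha}m^{-\alpha}$. Grouping the cuts by value into shells $VAL(C)/k\in[\alpha,\alpha+1)$ (equivalently, integrating $2\exp(-L\alpha)$ against the counting function $N(\alpha)\leq n^{2\alpha}$ by parts), each shell contributes at most $n^{2\alpha}\cdot 2n^{-(4+d)\alpha}m^{-\alpha}=2n^{-(2+d)\alpha}m^{-\alpha}$. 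Summing this geometric-type series over $\alpha\geq 1$ is dominated by the $\alpha=1$ term and yields $\O(n^{-(2+d)}m^{-1})=\O(1/n^{2+d}m)$, as claimed.

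The main obstacle is the union bound itself: a graph can have exponentially many cuts, so a naive union bound over cuts is hopeless, and the entire argument rests on trading the $n^{2\alpha}$ growth in the number of cuts of value at most $\alpha k$ against the $\exp(-\Theta(L\alpha))$ decay of the per-cut failure probability. The delicate point is getting the exponent exactly right, so that the product is $n^{-(2+d)\alpha}m^{-\alpha}$ and the series sums to the claimed $n^{-(2+d)}m^{-1}$ rather than something weaker; this is exactly why $\beta$ is forced to carry the factor $(4+d)\ln n+\ln m$, with the ``$4$'' absorbing both the $n^{2\alpha}$ cut count and the target failure probability $n^{-(2+d)}$. The role of the hypothesis that $Q$ is $k$-strong connected is precisely to guarantee the minimum-cut lower bound $k$ that licenses the cut-counting bound and anchors the summation at $\alpha=1$.
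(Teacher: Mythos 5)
Your proposal follows the same route as the paper's proof: observe that $A_Q$ implies $A_C$ for every cut $C$ of $Q$, apply Lemma \ref{thm:errorbound_singlecut} to each cut, use $k$-strong connectivity to lower-bound the minimum cut by $k$ so that Karger's bound of $n^{2\alpha}$ on the number of cuts of value at most $\alpha k$ applies, and then sum the per-cut failure probabilities $2(n^{4+d}m)^{-\alpha}$ against that count. Your handling of the $\beta$-versus-$\epsilon$ issue is actually more careful than the paper's, which proves the bound for deviation $\beta$ but states the lemma with $\epsilon$ (in the application $\beta=\epsilon/2\leq\epsilon$), and your check that $\beta\leq 2e-1$ is likewise glossed over in the paper.

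One concrete caveat: the unit-width shell computation and the integration by parts are \emph{not} equivalent, and only the latter yields the stated bound. The number of cuts with value in $[\alpha k,(\alpha+1)k)$ is bounded by $F(\alpha+1)\leq n^{2(\alpha+1)}$, not by $n^{2\alpha}$, so each shell contributes up to $2n^{2}\cdot n^{-(2+d)\alpha}m^{-\alpha}$; the first shell alone then gives only $\O(1/n^{d}m)$, a factor $n^{2}$ weaker than claimed. (That weaker bound would still salvage Theorem \ref{thm:correctness} after renaming $d$, since the subsequent union bound over fewer than $n^{2}$ strong components merely shifts the constant, but it does not prove the lemma as stated.) The fix is exactly the paper's Stieltjes-style bound $P(1)F(1)+\int_1^{\infty}P(\alpha)\frac{dF}{d\alpha}d\alpha$ with $F(\alpha)=n^{2\alpha}$, which your parenthetical invokes: it charges each cut of value $\alpha k$ exactly $P(\alpha)$ rather than the value of $P$ at the bottom of its shell, and the integral evaluates to $\frac{4\ln n}{(2+d)\ln n+\ln m}\cdot n^{-(2+d)}m^{-1}$, giving the claimed $\O(1/n^{2+d}m)$. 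So in the final write-up you should drop the unit shells (or shrink the shell width to zero) and carry out the integration by parts explicitly.
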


\begin{proof} Consider a cut $C$ whose value is $\alpha k$ in $Q$. If $A_Q$ holds, every edge in $C$ is also sampled with probability at least $p$. By Lemma \ref{thm:errorbound_singlecut}, $\Prob[A_Q\wedge |VAL(C,H_Q)-\alpha k|>\beta \alpha k]\leq 2\exp(-\beta^2p\alpha k/4)=2(n^{4+d}m)^{-\alpha}$. Let $P(\alpha)=2(n^{4+d}m)^{-\alpha}$.

Let $F(\alpha)$ be the number of cuts with value less or equal to $\alpha k$. By union bound, we have
$$\Prob[A_Q\wedge (H_Q\notin (1\pm\epsilon)Q)]\leq P(1)F(1)+\int_1^{\infty}P(\alpha)\frac{dF}{d\alpha}d\alpha.$$
The number of cuts whose value is at most $\alpha$ times minimum cut is at most $n^{2\alpha}$. Since the value of minimum cut of $Q$ is $k$, $F(\alpha)\leq n^{2\alpha}$. Since $P$ is a monotonically increasing function, this bound is maximized when $F(\alpha)=n^{2\alpha}$. Thus,
\begin{eqnarray}
\Prob[A_Q\wedge (H_Q\notin (1\pm\epsilon)Q)]
 & \leq & F(1)P(1)+\int_1^{\infty}P(\alpha)\frac{dF}{d\alpha}d\alpha \nonumber \\
 & \leq & n^2P(1)+\int_1^{\infty}P(\alpha)(2n^{2\alpha}\ln n)d\alpha \nonumber \\
 & \leq & \frac{2}{n^{2+d}m}+\int_1^{\infty}\frac{\ln n}{n^{\alpha (2+d)}m^{\alpha}}d\alpha \nonumber \\
 & = & \O(1/n^{2+d}m). \nonumber
\end{eqnarray}
\end{proof}

\subsection{Error Bound for $H_i$ and $H$}

\begin{lemma}\label{thm:errorbound_singlestep} The probability of $i$ being the first integer such that $H_i\notin (1\pm\epsilon)G_i$ is $\O(1/n^dm)$.
\end{lemma}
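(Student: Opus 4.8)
The plan is to condition on the past and show that the fresh randomness used up to $e_i$ is very unlikely to spoil $H_i$. Being the first bad index means $H_j\in(1\pm\epsilon)G_j$ for every $j<i$ while $H_i\notin(1\pm\epsilon)G_i$, so I work under the event that all earlier sparsifications are accurate; in particular each $H_{j-1}$ with $j\le i$ is then a good sparsification of $G_{j-1}$. The probability in question is therefore at most the probability, intersected with this conditioning, that $H_i\notin(1\pm\epsilon)G_i$. I bound the latter by decomposing $G_i$ according to its strong-connectivity structure (the dyadic levels $G_{i,j}$ and the weighted components $F_{i,j}$ of \cite{benczur1996}, chosen so that every cut value of $G_i$ is an additive combination of its cut values across levels, whence approximating each level within $(1\pm\epsilon)$ yields $H_i\in(1\pm\epsilon)G_i$) and applying Lemma \ref{thm:errorbound_singlecomponent} to each piece.

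The crux, and the step I expect to be hardest, is to verify the hypothesis $A_Q$ of Lemma \ref{thm:errorbound_singlecomponent}, namely that every edge of a component was sampled with probability at least the required $p$. The difficulty is that when $e_j$ ($j\le i$) is processed its probability $p_{e_j}=\min\{\rho/c^{(H_{j-1})}_{e_j},1\}$ is computed in the running sparsification rather than in $G_i$. Here the conditioning does the work: since $H_{j-1}\in(1\pm\epsilon)G_{j-1}$, a $k$-strong component of $H_{j-1}$ is $k/(1+\epsilon)$-strong in $G_{j-1}$, so $c^{(H_{j-1})}_{e_j}\le(1+\epsilon)c^{(G_{j-1})}_{e_j}$, and because $G_{j-1}\subseteq G_i$ makes strong connectivity only grow, $c^{(H_{j-1})}_{e_j}\le(1+\epsilon)c^{(G_i)}_{e_j}$. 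Thus any edge lying in the dyadic level with $c^{(G_i)}_{e_j}<2^{j'}$ is sampled with probability at least $\rho/((1+\epsilon)2^{j'})$, which is exactly the uniform lower bound $p$ that the level-$j'$ component needs. Conditioning on the past therefore replaces the troublesome data-dependent probabilities by a clean worst-case bound, and the event \textit{(all earlier good) and ($H_i$ bad)} is contained in \textit{($A_Q$ holds on every component) and (some component is distorted)}.

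It then remains to apply Lemma \ref{thm:errorbound_singlecomponent} to each $k$-strong component with $k=2^{j'-1}$ and the $p$ just derived, and to union-bound. Substituting $\rho=32((4+d)\ln n+\ln m)(1+\epsilon)/\epsilon^2$ gives $pk\ge 16((4+d)\ln n+\ln m)/\epsilon^2$, hence $\beta=\sqrt{4((4+d)\ln n+\ln m)/(pk)}\le\epsilon/2\le\epsilon$, which for small $\epsilon$ also satisfies $\beta\le 2e-1$; so each component is approximated to within $(1\pm\epsilon)$ with failure probability $\O(1/(n^{2+d}m))$. Since the components at each of the $\O(\log m)$ dyadic levels are vertex-disjoint, there are $\O(n\log m)$ of them, and the spare factor of $n^2$ in the exponent of Lemma \ref{thm:errorbound_singlecomponent} absorbs this union bound, leaving the claimed $\O(1/(n^{d}m))$. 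Combining the per-level $(1\pm\epsilon)$ guarantees through the additive decomposition yields $H_i\in(1\pm\epsilon)G_i$ off this small-probability event, which completes the argument.
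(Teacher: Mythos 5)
Your proposal follows the paper's own proof almost step for step: condition on all earlier sparsifications being good, use $c^{(H_{j-1})}_{e_j}\le(1+\epsilon)c^{(G_{j-1})}_{e_j}\le(1+\epsilon)c^{(G_i)}_{e_j}$ to certify the hypothesis $A_Q$ with $p=\rho/((1+\epsilon)2^{j'})$, apply Lemma \ref{thm:errorbound_singlecomponent} to the $2^{j'-1}$-strong components of the $F_{i,j'}$, and union-bound over the components (your count of $\O(n\log m)$ components is in fact sharper than the paper's crude bound of $n^2$, and both suffice).

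There is, however, one quantitative slip in your combination step, and it is precisely the step where the structure of the decomposition matters. The decomposition is not a plain additive one: since $F_{i,j}=G_{i,j}+\frac{1}{2}F_{i,j+1}$, we have $\sum_j F_{i,j}=2G_i$, and the identity actually used is $H_i=\sum_j\bigl(H_{i,j}+\tfrac{1}{2}F_{i,j+1}\bigr)-\sum_j\tfrac{1}{2}F_{i,j+1}$, where the subtracted part equals $G_i$ exactly. A relative error of $\beta'$ on each $F_{i,j}$ therefore becomes an additive error of $\beta'\cdot 2G_i$ before the subtraction, so the overall guarantee is only $(2\pm 2\beta')G_i-G_i=(1\pm 2\beta')G_i$. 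Consequently your claim that ``approximating each level within $(1\pm\epsilon)$ yields $H_i\in(1\pm\epsilon)G_i$'' loses a factor of two: per-level error $\epsilon$ gives only $(1\pm 2\epsilon)$, and the containment ``$H_i$ bad $\Rightarrow$ some component distorted beyond $\epsilon$'' fails as stated --- $H_i$ being bad only forces some component to be distorted beyond $\epsilon/2$. The fix is already in your hands: you correctly computed $\beta\le\epsilon/2$ from the stated value of $\rho$ (this is exactly why $\rho$ carries the constant $32$ rather than $8$), so you should retain the per-component guarantee $(1\pm\epsilon/2)$ instead of rounding it up to $(1\pm\epsilon)$; then the telescoping gives $(2\pm\epsilon)G_i-G_i=(1\pm\epsilon)G_i$, and your argument coincides with the paper's proof.
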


\begin{proof} If $H_j\in(1\pm\beta)G_j$ for all $j<i$, $c_{e_j}\leq (1+\epsilon)c^{(G_j)}_{e_j}\leq (1+\epsilon)c^{(G_i)}_{e_j}$. Remember that $c^{(G)}_e$ denotes the strong connectivity of $e$ in graph $G$.
\begin{eqnarray}
H_i & = & \sum_{j=-\infty}^{\infty} H_{i,j} \nonumber \\
 & = & \sum_{j=-\infty}^{\infty} \left(H_{i,j}+\frac{1}{2}F_{i,j+1}\right) - \sum_{j=-\infty}^{\infty} \frac{1}{2}F_{i,j+1} \nonumber
\end{eqnarray}
$H_{i,j}+(1/2)F_{i,j+1}$ is a sparsification of $G_{i,j}+(1/2)F_{i,j+1}=F_{i,j}$. $F_{i,j}$ consists of $2^{j-1}$-strong connected components. For every $e\in G_{i,j}$, $c^{(G_i)}_{e}<2^j$. So it is sampled with probability at least $p=\rho/(1+\epsilon)2^j$. If we consider one $2^{j-1}$-strong connected component and set $\rho=32((4+d)\ln n+\ln m)(1+\epsilon)/\epsilon^2$, by Lemma \ref{thm:errorbound_singlecomponent}, every cut has error bound $\epsilon/2$ with probability at least $1-\O(1/n^{2+d}m)$. Since there are less than $n^2$ such distinct strong connected components, with probability at least $1-\O(1/n^dm)$, $H_{i,j}+(1/2)F_{i,j+1}\in(1\pm\beta)F_{i,j}$ for every $i,j$. Hence,
\begin{eqnarray}
H_i & \in & \sum_{j=-\infty}^{\infty}(1\pm\epsilon/2) F_{i,j} - \sum_{j=-\infty}^{\infty} \frac{1}{2}F_{i,j+1} \nonumber \\
  & \subseteq & (2\pm\epsilon)G_i - G_i \nonumber \\
  & = & (1\pm\epsilon)G_i. \nonumber
\end{eqnarray}
Therefore, $\Prob[(\forall j<i.H_j \in (1\pm\epsilon)G_j)\wedge(H_i\notin (1\pm\epsilon)G_i)]=\O(1/n^dm)$.
\end{proof}

From Lemma \ref{thm:errorbound_singlestep}, Theorem \ref{thm:correctness} is obvious. $\Prob[H\notin (1\pm\epsilon)G]\leq \sum_{i=1}^{m}\Prob[(\forall j<i.H_j \in (1\pm\epsilon)G_j)\wedge(H_i\notin (1\pm\epsilon)G_i)]=\O(1/n^d)$.

\section{Proof of Theorem \ref{thm:space}}

For the proof of Theorem \ref{thm:space}, we use the following property of strong connectivity.

\begin{lemma}\label{thm:connectivity_bound}~\cite{benczur1996} If the total edge weight of graph $G$ is $n(k-1)$ or higher, there exists a $k$-strong connected components.
\end{lemma}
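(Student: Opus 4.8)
The plan is to prove the contrapositive in constructive form: if $G$ has no $k$-strong connected component, then its total edge weight is strictly less than $n(k-1)$; equivalently, weight at least $n(k-1)$ forces such a component to exist. Throughout I assume $k\ge 2$ (for $k=1$ any single edge already gives a $1$-strong connected pair of vertices) and, as in the multigraph setting of \cite{benczur1996}, that $G$ is integer-weighted, so that a cut of value $<k$ in fact has value at most $k-1$.

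First I would set up a recursive partition of the vertex set. Start from the trivial partition $\{V\}$. As long as some part $S$ with $|S|\ge 2$ induces a subgraph $G[S]$ that is \emph{not} $k$-strong connected, $G[S]$ has a cut of value $<k$ by definition, hence of value at most $k-1$; I use it to split $S$ into two nonempty parts and continue. The process halts with every part being either a singleton or a vertex set $S$ with $|S|\ge 2$ for which $G[S]$ is $k$-strong connected. Note that each step refines the partition, so parts only split and never merge.

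The argument then splits into two cases. If any final part $S$ has $|S|\ge 2$, then $G[S]$ is a $k$-strong connected node-induced subgraph, and the maximal such subgraph containing it is by definition a $k$-strong connected component, so we are done (this case also covers a $k$-strong connected $G$, where no split ever happens). Otherwise every final part is a singleton, and the counting finishes the proof: each edge is cut at exactly the unique split that first separates its two endpoints, so the total weight equals the sum of the cut values used over all splits. A binary partition refining $\{V\}$ to $n$ singletons uses exactly $n-1$ splits, each removing weight at most $k-1$, so the total weight is at most $(n-1)(k-1)<n(k-1)$. Hence weight at least $n(k-1)$ rules out the all-singletons case and yields a nontrivial, edge-containing $k$-strong connected component.

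The one delicate point, and the main obstacle, is the gap between a cut of value $<k$ and a cut of value $\le k-1$. With real-valued weights the naive charging only gives $n-1$ splits of weight $<k$, i.e. the weaker bound $(n-1)k=n(k-1)+(n-k)$, which exceeds $n(k-1)$ as soon as $n>k$ and is therefore useless; in fact the statement itself fails for real weights (a path $a\,\text{--}\,b\,\text{--}\,c$ with both edge weights $1.9$ has weight $3.8\ge n(k-1)=3$ for $k=2$ yet no nontrivial $2$-strong component). The clean bound $(n-1)(k-1)$ rests entirely on integrality of cut values, which lets each split be charged exactly $k-1$ rather than $k$, and I would state this assumption explicitly. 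A secondary, routine point to verify is that the subgraph produced in the non-singleton case is genuinely a $k$-strong connected \emph{component}, i.e.\ a maximal $k$-strong connected node-induced subgraph, which is immediate from maximality since $G[S]$ is already $k$-strong connected.
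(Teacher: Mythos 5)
There is no in-paper proof to compare against: Lemma \ref{thm:connectivity_bound} is imported from \cite{benczur1996} with a citation and no argument. Measured against the standard Bencz\'ur--Karger argument, your proof is essentially the same one: repeatedly split any part whose induced subgraph is not $k$-strong along a cut of value $<k$, charge each edge to the unique split that first separates its endpoints, and observe that refining $\{V\}$ down to singletons takes exactly $n-1$ binary splits. Your two-case treatment (a surviving non-singleton part sits inside a maximal $k$-strong component; otherwise the charging bounds the total weight) is correct and cleanly organized.

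The genuinely useful content is your observation about the constant. You are right that with arbitrary real weights the stated bound $n(k-1)$ is false --- your counterexample ($n=3$, $k=2$, a path with two edges of weight $1.9$, total weight $3.8 \ge 3$, no $2$-strong component) is valid --- and that integrality is exactly what upgrades ``cut value $<k$'' to ``cut value $\le k-1$'', giving the bound $(n-1)(k-1)<n(k-1)$. Without integrality the same charging yields only the weaker but correct statement that total weight at least $k(n-1)$ forces a $k$-strong component, since each of the $n-1$ splits removes weight strictly less than $k$. This distinction matters for this paper more than you could know from the statement alone: Lemma \ref{thm:connectivity_bound} is applied to the sparsified graph $H$, whose edge weights $1/p_e = c_e/\rho$ are fractional, so the integrality hypothesis under which you rescue the constant $n(k-1)$ is unavailable precisely where the lemma is used. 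The robust fix is to invoke the real-weight version with $k(n-1)$; in the proof of Theorem \ref{thm:space} this changes the bound $\sum_{e\in E_k} w_H(e)\le n(k+k/\rho)$ to $(k+k/\rho+1)(n-1)$, which is immaterial asymptotically. In short: your proof is correct for what it proves, it follows the same route as the cited source, and your caveat exposes a real (though asymptotically harmless) imprecision in the lemma as stated.
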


\begin{lemma} $H\in(1\pm\epsilon)G$, total edge weight of $H$ is at most $(1+\epsilon)m$.
\end{lemma}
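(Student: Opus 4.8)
The plan is to relate the total edge weight of a graph to a particular family of cut values, namely the singleton cuts. For any weighted graph $W$ on vertex set $V$, consider for each vertex $v$ the cut that separates $\{v\}$ from $V\setminus\{v\}$; its value $VAL(\{v\},W)$ is exactly the weighted degree of $v$. Summing over all $v$, each edge is counted once from each of its two endpoints, so
$$\sum_{v\in V} VAL(\{v\},W) = 2\cdot(\text{total edge weight of } W).$$
This identity is the only structural fact I need, and it holds for both $G$ and $H$.

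Next I would invoke the hypothesis $H\in(1\pm\epsilon)G$. By definition this means $(1-\epsilon)VAL(C,G)\leq VAL(C,H)\leq(1+\epsilon)VAL(C,G)$ for every cut $C$; I only need the upper-bound direction, applied to each singleton cut $\{v\}$, giving $VAL(\{v\},H)\leq(1+\epsilon)VAL(\{v\},G)$. Summing this inequality over all $v$ and applying the identity above to each side yields
$$2\cdot(\text{total edge weight of } H)=\sum_{v} VAL(\{v\},H)\leq(1+\epsilon)\sum_{v}VAL(\{v\},G)=2(1+\epsilon)\cdot(\text{total edge weight of } G).$$
Dividing by $2$ shows the total weight of $H$ is at most $(1+\epsilon)$ times the total weight of $G$.

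Finally I would observe that the input graph $G$ is unweighted, so each of its $m$ edges contributes weight $1$ and its total edge weight is exactly $m$. Combining this with the previous display gives the desired bound of $(1+\epsilon)m$ on the total edge weight of $H$.

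I do not expect a genuine obstacle here: the argument is essentially the singleton-cut identity plus one application of the approximation hypothesis. The only points that require a moment of care are that we use only the one-sided (upper) bound from $H\in(1\pm\epsilon)G$, and that the reduction to $m$ relies on $G$ having unit edge weights so that its total weight coincides with its edge count. If one wished to avoid the singleton-cut route, an alternative would be to bound each surviving edge weight $1/p_e$ directly, but the cut-based argument is cleaner and uses the hypothesis as a black box.
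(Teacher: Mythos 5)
Your proof is correct and follows exactly the paper's argument: both use the singleton cuts $(\{v\},V\setminus\{v\})$, the identity that summing their values counts each edge twice, and the one-sided upper bound from $H\in(1\pm\epsilon)G$. Your write-up is in fact slightly more careful than the paper's, which contains a typo ($VAL(C_v,H)$ where it means $VAL(C_v,G)$) in the step equating the sum of cut values in $G$ to $2m$.
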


\begin{proof} Let $C_v$ be a cut $(\{v\},V-\{v\})$. Since $H\in(1\pm\epsilon)G$,
$VAL(C_v,H)\leq(1+\epsilon)VAL(C_v,G)$. Total edge weight of $H$ is $(\sum_{v\in V} VAL(C_v,H))/2$ since each edge is counted for two such cuts. Similarly, $G$ has $(\sum_{v\in V} VAL(C_v,H))/2=m$ edges. Therefore, if $H\in(1\pm\epsilon)G$, total edge weight of $H$ is at most $(1+\epsilon)m$.
\end{proof}

Let $E_k=\{e:e\in H{\rm~and~}c_e\leq k\}$. $E_k$ is a set of edges that sampled with $c_e=k$. We want to bound the total weight of edges in $E_k$.

\begin{lemma} $\sum_{e\in E_k} w_H(e)\leq n(k+k/\rho)$.
\end{lemma}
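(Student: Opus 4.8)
The plan is to bound the total weight by a single application of the connectivity lemma (Lemma~\ref{thm:connectivity_bound}) together with an argument that exploits the monotonicity of strong connectivity as edges arrive. First I would record the only fact about individual weights that I need: for $e\in E_k$ we have $w_H(e)=1/p_e=\max\{c_e/\rho,\,1\}\le\max\{k/\rho,\,1\}$, since $c_e\le k$. In particular, deleting any single edge of $E_k$ from a subgraph can lower every cut by at most $\max\{k/\rho,\,1\}$.

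Next, consider the subgraph $H^{(k)}=(V,E_k)$ carrying the weights inherited from $H$, and suppose toward a contradiction that $\sum_{e\in E_k}w_H(e)>n(k+k/\rho)$. Writing $K=k+k/\rho+1$, the total weight exceeds $n(K-1)$, so Lemma~\ref{thm:connectivity_bound} guarantees a $K$-strong component $S$ inside $H^{(k)}$: every cut of $S$ has value at least $K$ using only edges of $E_k$.

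The core of the argument is then a \emph{last edge} trick. Let $e^*$ be the edge of $S$ with the largest arrival index. At the moment the algorithm processed $e^*$, every other edge of $S$ --- all of which lie in $E_k$ and hence were already in $H$ --- was present. Deleting $e^*$ from $S$ lowers each cut of $S$ by at most $w_H(e^*)\le\max\{k/\rho,\,1\}$, so every cut of $S$ still had value at least $K-\max\{k/\rho,\,1\}>k$ just before $e^*$ arrived (both cases $k\ge\rho$ and $k<\rho$ give a bound strictly above $k$). Thus the endpoints of $e^*$ already lay in a strong component of connectivity strictly larger than $k$, forcing the connectivity $c_{e^*}$ computed by the algorithm to exceed $k$. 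This contradicts $e^*\in E_k$, which requires $c_{e^*}\le k$, so the assumed weight bound is impossible and the claim follows.

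The step I expect to be most delicate is the last one: I must argue that the connectivity $c_{e^*}$ that the streaming algorithm actually computes --- in the evolving sparsifier $H$ just before $e^*$ is inserted --- is at least the within-$S$ strong connectivity I derived. This is where choosing $e^*$ as the last-arriving edge of $S$ is essential, since it guarantees that all of $S\setminus\{e^*\}$ is already in memory; and it is where monotonicity is used, because the presence of additional edges of $H$ outside $S$ can only raise cut values and hence can only increase $c_{e^*}$. Reconciling the definition of strong connectivity of an edge (which nominally requires the edge to be in the graph) with the ``connectivity in $H$'' computed before insertion is the only genuinely subtle point; the remaining case checks are routine bookkeeping.
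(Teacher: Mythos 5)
Your proof is correct and follows essentially the same route as the paper: both apply Lemma~\ref{thm:connectivity_bound} to the subgraph on $E_k$ and derive a contradiction by observing that the last-arriving edge of a hypothetical $(k+k/\rho+1)$-strong component would have had connectivity exceeding $k$ in the sparsifier at its arrival time (the paper phrases this as the ``first edge creating'' such a component, which is the same edge). Your treatment is in fact slightly more careful than the paper's, since you bound the edge weight by $\max\{k/\rho,1\}$ and check both cases, whereas the paper only invokes the bound $k/\rho$.
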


\begin{proof} Let $H'$ be a subgraph of $H$ that consists of edges in $E_k$. $H'$ does not have $(k+k/\rho+1)$-strong connected component. Suppose that it has. Then there exists the first edge $e$ that creates a $(k+k/\rho+1)$-strong connected component in $H'$. In that case, $e_i$ must be in the $(k+k/\rho+1)$-strong connected component. However, since weight $e$ is at most $k/\rho$, that component is at least $(k+1)$-strong connected without $e$. This contradicts that $c_e\leq k$. Therefore, $H'$ does not have any $(k+k/\rho+1)$-strong connected component. By Lemma \ref{thm:connectivity_bound}, $\sum_{e\in E_k} w_H(e)\leq n(k+k/\rho)$.
\end{proof}

Now we prove Theorem \ref{thm:space}.

\begin{proof}[Proof of Theorem \ref{thm:space}] If the total edge weight
  is the same, the number of edges is maximized when we sample edges with
  smallest strong connectivity. So in the worst case,

$$\sum_{e\in E_k-E_{k-1}} w_H(e)=nk(1+\rho)-n(k-1)(1+\rho)=n(1+\rho).$$ 
In that case, $k$ is at most $(1+\epsilon)m/n(1+1/\rho)$. Let this value be $k_m$. Then, total number of edges in $H$ is
\begin{eqnarray}
\sum_{i=1}^{k_m} \frac{n(1+1/\rho)}{i/\rho}
 & = & n(\rho+1) \sum_{i=1}^{k_m} \frac{1}{i} \nonumber \\
 & = & O(n(\rho+1) \log(k_m)) \nonumber \\
 & = & O(n\rho(\log m-\log n)) \nonumber \\
 & = & O(n(d\log n+\log m)(\log m-\log n)(1+\epsilon)^2/\epsilon^2). \nonumber
\end{eqnarray}
\end{proof}

\section{Space Lower bounds}

First, we prove a simple space lowerbound for weighted graphs, where the 
lowerbound depends on $\epsilon$.

\begin{theorem}\label{thm:lowerbound} For $0<\epsilon<1$, $\Omega(n(\log C+\log \frac{1}{\epsilon}))$ bits are required in order to sparsify every cut of a weighted graph within $(1\pm\epsilon)$ factor where $C$ is maximum edge weight and $1$ is minimum edge weight.
\end{theorem}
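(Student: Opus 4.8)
The plan is to prove this by a direct encoding (incompressibility) argument. Suppose some sparsification scheme stores each input $G$ as an $s$-bit data structure from which every cut value of $G$ can be reported within a $(1\pm\epsilon)$ factor; note that such a data structure is itself (equivalent to) a valid $(1\pm\epsilon)$ cut-sparsifier of $G$. The reduction I would set up is this: if we can exhibit a family $\mathcal{F}$ of weighted $n$-vertex graphs that is \emph{pairwise non-co-sparsifiable} — meaning no single weighted graph $H$ is a valid $(1\pm\epsilon)$ sparsifier for two distinct members of $\mathcal{F}$ — then the scheme must output distinct data structures on distinct members of $\mathcal{F}$. A scheme using $s$ bits has at most $2^s$ outputs, so $s \ge \log_2|\mathcal{F}|$, and the entire problem reduces to constructing $\mathcal{F}$ as large as possible.

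For the construction I would fix a perfect matching $M=\{u_1v_1,\dots,u_{n/2}v_{n/2}\}$ and let each member of $\mathcal{F}$ be $M$ with an independently chosen weight $w_i\in W$ on edge $i$, where $W\subseteq[1,C]$ is a set of pairwise \textbf{$(1\pm\epsilon)$-separated} weights: any $a<b$ in $W$ satisfy $(1+\epsilon)a<(1-\epsilon)b$, i.e. $b/a>\frac{1+\epsilon}{1-\epsilon}$. The separation lemma is then immediate and is the structural heart of the argument: if two assignments differ on edge $i$, the singleton cut $(\{u_i\},V\setminus\{u_i\})$ has value exactly $w_i$ in one graph and $w_i'$ in the other, and because $w_i,w_i'$ are $(1\pm\epsilon)$-separated the admissible sparsifier intervals $[(1-\epsilon)w_i,(1+\epsilon)w_i]$ and $[(1-\epsilon)w_i',(1+\epsilon)w_i']$ are disjoint. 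Hence no $H$ can simultaneously report that cut correctly for both graphs, so distinct assignments are non-co-sparsifiable. This gives $|\mathcal{F}|=|W|^{n/2}$ and therefore $s\ge \tfrac{n}{2}\log_2|W|$.

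The crux — and the step I expect to be the main obstacle — is lower bounding $|W|$, since this is exactly where the dependence on both $C$ and $\epsilon$ must be extracted, and it is delicate because a single scalar cut only resolves a weight up to a multiplicative factor. Here I would split $[1,C]$ into two regimes. In the \emph{additive} regime, all integers $1,2,\dots,\lceil\frac{1}{2\epsilon}\rceil$ are pairwise separated, since consecutive integers $w,w+1$ have ratio $1+\tfrac1w>\frac{1+\epsilon}{1-\epsilon}$ as long as $w\lesssim\frac{1}{2\epsilon}$; this regime is what forces the $\log\frac1\epsilon$ dependence. In the \emph{multiplicative} regime one takes a geometric progression of ratio $\frac{1+\epsilon}{1-\epsilon}$ running up to the maximum weight $C$, which is what carries the dependence on the full weight range. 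Carrying out this count carefully, and choosing $W$ so that both regimes contribute, is the technical obstacle; once $\log_2|W|$ is pinned down in terms of $C$ and $\epsilon$, plugging into $s\ge\tfrac{n}{2}\log_2|W|$ yields the claimed $\Omega\!\big(n(\log C+\log\frac1\epsilon)\big)$ bit lower bound. A point worth checking carefully along the way is that using only singleton cuts (rather than general cuts) is both sufficient for the separation lemma and tightest, and that the argument is genuinely worst-case over input orderings, so that it applies to any single-pass scheme and not merely to a fixed representation.
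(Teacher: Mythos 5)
Your construction and framework are essentially the paper's own proof: the paper takes a star on $n$ vertices whose $n-1$ edge weights are drawn from the geometric ladder $1,\frac{1+\epsilon}{1-\epsilon},\left(\frac{1+\epsilon}{1-\epsilon}\right)^2,\ldots,C$, notes (via the same singleton-cut separation you use) that distinct assignments require distinct sparsifiers, and concludes from $\log|F|$; your matching-versus-star choice is immaterial. The genuine gap is at exactly the step you deferred as the crux, and it is fatal to the stated constant. Pairwise $(1\pm\epsilon)$-separation forces the sorted elements of $W$ to increase by a factor larger than $\frac{1+\epsilon}{1-\epsilon}$ at each step, so $|W|\le \log_{\frac{1+\epsilon}{1-\epsilon}}C+1=O(\epsilon^{-1}\log C)$; your additive regime contributes only $\Theta(1/\epsilon)$ and is absorbed. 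Hence $\log_2|W|=\log\log C+\log\frac{1}{\epsilon}+O(1)$, and $s\ge\frac{n}{2}\log_2|W|$ yields only $\Omega\left(n\left(\log\log C+\log\frac{1}{\epsilon}\right)\right)$: the $\log\frac1\epsilon$ term survives, but the $\log C$ term collapses to $\log\log C$ the moment you take the logarithm of the ladder length. Nor can a cleverer family of matchings/stars help: partition $[1,C]$ into $O(\epsilon^{-1}\log C)$ multiplicative cells of ratio $\frac{1+\epsilon}{1-\epsilon}$; two weight assignments that agree cell-wise in every coordinate are co-sparsified by the graph carrying a common cell-wise approximate weight (cut values here are sums of crossing weights), so any pairwise non-co-sparsifiable family has size at most $\left(O(\epsilon^{-1}\log C)\right)^{n/2}$.

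In fairness, the paper's proof has the identical flaw: it asserts "it is easy to show that $\log|F|=\Omega(n(\log C+\log\frac{1}{\epsilon}))$" for $|F|=(\log_{\frac{1+\epsilon}{1-\epsilon}}C)^{n-1}$, whereas in truth $\log|F|=\Theta\left(n\left(\log\log C+\log\frac{1}{\epsilon}\right)\right)$. Indeed the $\log C$ term in the theorem statement appears to be unprovable in general, not merely by this route: for, say, $C=2^{n}$ and constant $\epsilon$, a Benczur--Karger sparsifier with weights rounded to powers of $(1+\epsilon)$ can be encoded in $O(n\,\mathrm{polylog}\,n)$ bits, contradicting a purported $\Omega(n\log C)=\Omega(n^2)$ lower bound. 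So your proposal faithfully reproduces the paper's argument, including its error; judged against the literal statement, both establish only the weaker (and correct) bound $\Omega\left(n\left(\log\log C+\log\frac{1}{\epsilon}\right)\right)$, which is still consistent with the $\Omega(n\log\frac{1}{\epsilon})$ claim made in the paper's abstract.
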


\newcommand{\factor}{{\left(\frac{1+\epsilon}{1-\epsilon}\right)}}
\begin{proof} Let $F$ be a set of graphs such that there is a center node $u$ and other nodes are connected to $u$ by an edge whose weight is one of $1,\factor,\factor^2,\cdots,C$. Then, $|F|=(\log_\factor C)^{n-1}$. For $G,G'\in F$, they must have different sparsifications. So we need $\Omega(\log |F|)$ bits for sparsfication. It is easy to show that $\log |F|=\Omega(n(\log C+\log \frac{1}{\epsilon}))$.
\end{proof}

Now we use the same proof idea for unweighted simple graphs. Since we cannot assign weight as we want, we use $n/2$ nodes as a center instead of having one center node. In this way, we can assign degree of a node from $1$ to $n/2$.

\begin{theorem}\label{thm:unweighted_lowerbound} For $0<\epsilon<1$, $\Omega(n(\log n+\log \frac{1}{\epsilon}))$ bits are required in order to sparsify every cut of a graph within $(1\pm\epsilon)$.
\end{theorem}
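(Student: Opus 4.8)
The plan is to mirror the proof of Theorem~\ref{thm:lowerbound}, replacing the single weighted center by a cluster of $n/2$ vertices so that integer degrees can simulate the continuum of edge weights. First I would partition the $n$ vertices into a \emph{hub} set $R$ with $|R|=n/2$ and a set $L$ of $n/2$ \emph{probe} vertices, and put no edges inside $L$. Each probe vertex $v\in L$ is joined to some subset $S_v\subseteq R$; since $v$ has no other incident edges, the singleton cut $C_v=(\{v\},V\setminus\{v\})$ has value exactly $\deg(v)=|S_v|$, and by choosing $|S_v|$ we realize any degree in $\{1,2,\dots,n/2\}$. This is the unweighted substitute for assigning an arbitrary weight to the spoke of a star.

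Next I would record what a sparsifier must remember about each probe. If $H\in(1\pm\epsilon)G$ then $VAL(C_v,H)\in[(1-\epsilon)\deg(v),(1+\epsilon)\deg(v)]$, so $H$ pins down $\deg(v)$ only up to the multiplicative factor $\factor$. Accordingly I would restrict the degrees to a geometric grid inside $[1,n/2]$: set $d_0=1$ and $d_{t+1}=\lceil\factor\, d_t\rceil$, giving $\Theta(\log_{\factor}(n/2))$ distinct, pairwise ``$\factor$-separated'' integer degrees. Let $F$ be the family of graphs obtained by assigning to each of the $n/2$ probes an arbitrary grid degree independently, so $|F|=\Theta\!\left((\log_{\factor}(n/2))^{n/2}\right)$.

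The core step is the incompatibility claim: any two distinct $G,G'\in F$ require different sketches. They differ at some probe $v$ whose degrees $\deg_G(v)$ and $\deg_{G'}(v)$ are two distinct grid points, hence differ by a factor at least $\factor$; the admissible intervals $[(1-\epsilon)\deg_G(v),(1+\epsilon)\deg_G(v)]$ and $[(1-\epsilon)\deg_{G'}(v),(1+\epsilon)\deg_{G'}(v)]$ are then disjoint. A single stored structure $H$ yields one number $VAL(C_v,H)$, which can fall in at most one interval, so $H$ cannot be a valid $(1\pm\epsilon)$-sparsifier of both. Hence the algorithm must reach at least $|F|$ distinct memory states, and the space is at least $\log_2|F|$ bits. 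Carrying out the same estimate as in Theorem~\ref{thm:lowerbound}, now with the degree range $[1,n/2]$ in the role of the weight range $[1,C]$ and an extra factor-two loss from splitting the vertices into $L$ and $R$, gives $\log_2|F|=\Omega\!\left(n\left(\log n+\log\tfrac1\epsilon\right)\right)$.

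The main obstacle I anticipate is not the counting but the realizability and separation of degrees in a \emph{simple, unweighted} graph: I must ensure the grid consists of genuinely distinct integers, so that consecutive levels really are $\factor$-separated, which needs care when $\factor$ is close to $1$ (that is, $\epsilon$ small) near the bottom of the range, where $\lceil\factor\, d_t\rceil$ may only equal $d_t+1$; and each chosen degree must be at most $|R|=n/2$ so that a valid subset $S_v\subseteq R$ exists. I would also check that no larger cut accidentally makes two members compatible, but this is automatic, since incompatibility on even a single cut suffices and the singleton cuts $C_v$ already separate every pair in $F$.
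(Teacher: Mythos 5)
Your construction is exactly the paper's: a bipartite graph with $n/2$ hub vertices and $n/2$ probe vertices, probe degrees restricted to a geometric grid in $[1,n/2]$, pairwise incompatibility certified by the singleton cuts $C_v$, and an information-theoretic count of $\log_2|F|$ memory states. Your extra care about integrality (the ceiling grid $d_{t+1}=\lceil\factor d_t\rceil$, which keeps consecutive levels $\factor$-separated since $\lceil\factor d_t\rceil\geq\factor d_t$) is a genuine improvement over the paper, which speaks of ``degree $\factor^t$'' as if these were integers.

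However, the final step --- which you delegate to ``the same estimate as in Theorem~\ref{thm:lowerbound}'' --- does not give what you claim, and this is a real gap (one your proposal shares with the paper's own proof, so you reconstructed the intended argument faithfully; but the argument itself is deficient). The number of grid levels is $N\leq\log_\factor(n/2)=O\!\left(\frac{\log n}{\epsilon}\right)$, so
\[
\log_2|F|=\frac{n}{2}\log_2 N=O\!\left(n\left(\log\log n+\log\tfrac1\epsilon\right)\right).
\]
For constant $\epsilon$ this is $\Theta(n\log\log n)$, which falls short of the claimed $\Omega(n\log n)$: each probe carries only $\log_2 N$ bits of distinguishable information, and $N$ is merely polylogarithmic in $n$ when $\epsilon$ is constant. (The identical slip occurs in the paper's proof of Theorem~\ref{thm:lowerbound}, where $\log|F|=(n-1)\log\left(\log_\factor C\right)=\Theta\!\left(n\left(\log\log C+\log\tfrac1\epsilon\right)\right)$, not $\Omega(n\log C)$.) What the grid family honestly yields is $\Omega\!\left(n\log\tfrac1\epsilon\right)$ --- the bound stated in the paper's abstract --- and it recovers the $\log n$ term only in the regime $\log\tfrac1\epsilon=\Omega(\log n)$. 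To get $\Omega(n\log n)$ for all $\epsilon$ you need a second family; for instance, the $(n/2)!$ perfect matchings between $L$ and $R$ are pairwise incompatible: if $u$ is matched to $x$ in $G$ but to $y\neq x$ in $G'$, the cut $(\{u,x\},V\setminus\{u,x\})$ has value $0$ in $G$ and $2$ in $G'$, and no single $H$ can $(1\pm\epsilon)$-approximate both. This gives $\log_2((n/2)!)=\Omega(n\log n)$ bits, and the theorem follows by taking the maximum of the two lower bounds.
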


\begin{proof} Consider bipartite graphs where each side has exactly $n/2$ nodes and each node in one side has a degree $1,\factor,\factor^2,\cdots,$ or $n/2$. For each degree assignment, there exists a graph that satisfies it. Let $F$ be a set of graphs that has different degree assignments. Then, $|F|=\left(\log_\factor \frac{n}{2}\right)^{n-1}$. $G,G'\in F$ cannot have the same sparsification. So we need at least $\Omega(\log |F|)=\Omega(n(\log n+\log \frac{1}{\epsilon}))$ bits.
\end{proof}

Another way of viewing the above claim is a direct sum construction, where
we need to use $\Omega(\log \frac1\epsilon)$ bits to count upto a
precision of $(1+\epsilon)$.

\section{Conclusion and Open Problems}

We presented a one pass semi-streaming algorithm for the adversarially
ordered data stream model which uses $O(n(d\log n+\log m)(\log m-\log
n)(1+\epsilon)^2/\epsilon^2)$ edges to provide $\epsilon$ error bound for
cut values with probability $1-O(1/n^d)$. If the graph does not have
parallel edges, the space requirement reduces to $O(dn\log^2 n
(1+\epsilon)^2/\epsilon^2)$. We can solve the minimum cut problem or other
problems related to cuts with this sparsification. For the minimum cut
problem, this provides one-pass
$((1+\epsilon)/(1-\epsilon))$-approximation algorithm.

A natural open question is to determine how the space complexity of the
approximation depends on $\epsilon$. Our conjecture is that the bound of
$n/\epsilon^2$ is tight up to logarithmic factors.

\bibliographystyle{plain}

\end{document}